\def\Pr{\mathop{\rm Pr}}
\newtheorem{assumption}{Assumption}
\newtheorem{exmp}{Example}
\newtheorem{mydef}{Definition}
\newtheorem{remark}{Remark}
\newcommand{\Zplus}{\mathds{Z}_+}
\newcommand{\R}{\mathds{R}}
\newcommand{\N}{\mathds{N}}
\begin{document}

\sloppy
\title{Robustness to Incorrect Models and Data-Driven Learning in Average-Cost Optimal Stochastic Control
\thanks{
This research was supported in part by
the Natural Sciences and Engineering Research Council (NSERC) of Canada. Some preliminary results involving this paper were presented at the 2019 IEEE Conference on Decision and Control \cite{karaCDC2019robustness}.}
}
\author{Ali Devran Kara, Maxim Raginsky, and Serdar Y\"uksel
\thanks{A. D. Kara and S. Y\"uksel are with the Department of Mathematics and Statistics,
     Queen's University, Kingston, ON, Canada,
     email: \{16adk,yuksel\}@mast.queensu.ca. M. Raginsky is with the University of Illinois, email: maxim@illinois.edu}
     }
\maketitle

\begin{abstract}
We study continuity and robustness properties of infinite-horizon average expected cost problems with respect to (controlled) transition kernels, and applications of these results to the problem of robustness of control policies designed for approximate models applied to actual systems. We show that sufficient conditions presented in the literature for discounted-cost problems are in general not sufficient to ensure robustness for average-cost problems. However, we show that the average optimal cost is continuous in the convergences of controlled transition kernel models where convergence of models entails (i) continuous weak convergence in state and actions, and (ii) continuous setwise convergence in the actions for every fixed state variable, in addition to either uniform ergodicity or some regularity conditions. We establish that the mismatch error due to the application of a control policy designed for an incorrectly estimated model to the true model decreases to zero as the incorrect model approaches the true model under the stated convergence criteria. Our findings significantly relax related studies in the literature which have primarily considered the more restrictive total variation convergence criteria. Applications to robustness to models estimated through empirical data (where almost sure weak convergence criterion typically holds, but stronger criteria do not) are studied and conditions for asymptotic robustness to data-driven learning are established.
\end{abstract}

\section{Introduction}\label{section:intro}

\subsection{Preliminaries}
The paper studies continuity and robustness properties of infinite-horizon average cost problems with respect to transition probabilities. Continuity results are used to establish robustness of optimal control policies applied to systems with incorrect or approximate models. Before proceeding further with the problem formulation, we present the setup. Let $\mathds{X} \subset \mathds{R}^n$ be a Borel set in which the elements of a
controlled Markov process $\{X_t,\, t \in \Zplus\}$ take values.  Here
and throughout the paper, $\Zplus$ denotes the set of non-negative
integers and $\mathds{N}$ denotes the set of positive integers. Let $\mathds{U}$, the action space, be a Borel
subset of some Euclidean space. An {\em admissible policy} $\gamma$ is a
sequence of control functions $\{\gamma_t,\, t\in \Zplus\}$, such
that $\gamma_t$ is measurable on the $\sigma$-algebra
generated by the information variables
\[
I_t=\{X_{[0,t]},U_{[0,t-1]}\}, \quad t \in \mathds{N}, \quad
  \quad I_0=\{X_0\},
\]
where
\begin{equation}
\label{eq_control}
U_t=\gamma_t(I_t),\quad t\in \Zplus,
\end{equation}
are the $\mathds{U}$-valued control
actions.
\noindent We define $\Gamma$ to be the set of all such admissible policies.

The joint distribution of the state and control
processes is determined by (\ref{eq_control}) and the following
relationship:
\begin{eqnarray*}
\label{eq_evol}
& \Pr\biggl( X_t\in B \, \bigg|\, (X,U)_{[0,t-1]}=(x,u)_{[0,t-1]} \biggr) \\
& \qquad = \int_B \mathcal{T}( dx_t|x_{t-1}, u_{t-1}),  B\in \mathcal{B}(\mathds{X}), t\in \mathds{N},
\end{eqnarray*}
where $\mathcal{T}(\cdot|x,u)$ is a stochastic kernel  (that is, a regular conditional probability measure) from $\mathds{X}\times
\mathds{U}$ to $\mathds{X}$.


The objective of the controller is to minimize the infinite-horizon average expected cost
  \begin{align*}
    J_{\infty}({\mathcal{T}},\gamma)= \limsup_{N\to \infty}\frac{1}{N}E_{x_0}^{{\mathcal{T}},\gamma}\left[\sum_{t=0}^{N-1} c(X_t,U_t)\right]
  \end{align*}
over the set of admissible policies $\gamma\in\Gamma$, where $c:\mathds{X}\times\mathds{U}\to\R$ is the stage-wise (Borel measurable) cost function and $E_{x_0}^{{\mathcal{T}},\gamma}$ denotes the expectation with initial state $x_0$ and transition kernel $\mathcal{T}$ under policy $\gamma$. To denote the explicit dependence of the optimal cost in the transition kernel, we use the notation
\begin{align*}
  J_{\infty}^*({\mathcal{T}})&=\inf_{\gamma\in\Gamma} J_{\infty}(\mathcal{T},\gamma).
\end{align*}
The focus of the paper is to address the following problems:

\noindent{\bf Problem P1: Continuity of  $J_\infty^*(\mathcal{T})$ (optimal cost for controlled setup) in transition kernels.}
 Suppose ${\{\mathcal{T}_n, n \in \mathds{N}\}}$ is a
sequence of transition kernels converging in some sense to $\mathcal{T}$. When does $\mathcal{T}_n \to \mathcal{T}$
imply
\begin{align*}
  J_\infty^*(\mathcal{T}_n)&\to J_\infty^*(\mathcal{T})?
\end{align*}
\noindent{\bf Problem P2: Robustness of  policies designed for incorrectly estimated models.}
 Suppose ${\{\mathcal{T}_n, n \in \mathds{N}\}}$ is a
sequence of transition kernels converging in some sense to $\mathcal{T}$. If we design optimal policies $\gamma_n^*$ for the estimated model $\mathcal{T}_n$ and apply them to the real model $\mathcal{T}$, when does $\mathcal{T}_n \to \mathcal{T}$
imply
\begin{align*}
  J_\infty(\mathcal{T},\gamma_n^*)\to J_\infty^*(\mathcal{T})?
\end{align*}
Application of the above to empirical learning, where models are estimated through empirical measurements, will also be studied.

\subsection{Literature Review and Contributions}
Robustness is a desired property for the optimal control of stochastic or deterministic systems when a given model does not reflect the actual system perfectly, as is usually the case in practice. This is a classical problem, so there is a very large literature on robust stochastic control and its application to learning-theoretic methods; see e.g. \cite{jacobson1973optimal,dupuis2000robust,dupuis2000kernel,dai1996connections,esfahani2015,erdogan2005,sun2015,basu2008stochastic,Kleptsyna2016,ugrinovskii1998robust}. 

In \cite{Kernel2018_cdc,kara2020robustness}, we studied continuity and robustness properties of partially and fully observed models under weak and total variation convergence of transition probabilities for infinite-horizon optimal stochastic control under discounted-cost criteria. 
We showed that the expected induced cost is robust under total variation in that it is continuous in the mismatch of transition kernels under convergence in total variation for the discounted cost setup. By imposing further assumptions on the measurement models and on the kernel itself, it was shown that the optimal discounted cost can be made continuous under weak convergence of transition kernels as well. For the expected average-cost criteria, in this paper, we show that the sufficient conditions presented in \cite{Kernel2018_cdc,kara2020robustness} for the expected discounted cost criteria may not guarantee continuity and robustness. This arises from the fact that the persistent errors between the transition probabilities for time stages in the distant future still matter unlike the discounted-cost setup, where such errors are discounted away. 



In our analysis, we will study various convergence criteria. Our findings involving weak convergence are related to the following convergence notion: Let a stochastic process converge to another one in the following sense: all finite dimensional marginals converge weakly and the conditional kernels on the future random variables given the past converge weakly as well. This has been termed as {\it extended weak convergence} \cite{aldous1981weak} or convergence under the \textit{the information topology} \cite{hellwig1996sequential}; these have recently shown to be equivalent in discrete-time \cite[Theorem 1.1]{backhoff2019all}. Applications of this notion of convergence to robustness to system models in stochastic control have recently been studied in different contexts \cite{bayraktar2020continuity,julio2020adapted}, \cite{kara2020robustness}. In our case, as in \cite{kara2020robustness}, we will relate this to a generalized (dominated) convergence criterion under {\it continuous weak convergence} \cite[Theorem 3.5]{Langen81} and \cite[Theorem 3.5]{serfozo82}. However, we emphasize that, unlike \cite{kara2020robustness}, the average cost setup requires additional novel arguments. Furthermore, convergence notions more general than weak convergence will be established; notably, setwise convergence of kernels in actions for each fixed state will be a setup to also be studied extensively in our paper. 

A particularly relevant study which has investigated the robustness problem under the expected average-cost criterion is \cite{her89} by Hernandez-Lerma: In \cite[Chapter 3]{her89}, a related problem to what we study in this paper is considered: Assume the transition probabilities are estimated through time, and at every time step $t$ the estimate is updated to some $\mathcal{T}_t$. Is it true that, as the estimates through time get closer to the true model (as $\mathcal{T}_t \to \mathcal{T}$) we can establish continuity and robustness? It is shown in \cite[Chapter 3]{her89} that, using some geometric ergodicity conditions and with uniform convergence  $\sup_{x,u}\|\mathcal{T}_n(\cdot|x,u)-\mathcal{T}(\cdot|x,u)\|_{TV}\to 0$, the answer to the question is positive. We will show that the conditions imposed in \cite[Chapter 3]{her89} are more restrictive than what we will present in this paper.

{\bf Contributions.} In view of the literature review reported above, we establish robustness and continuity results under much weaker conditions on the proximity and convergence properties between incorrect models and a true model: we show that the average optimal cost is continuous in the convergences of controlled transition kernel models where convergences of models entail (i) continuous weak convergence in state and actions, or (ii) continuous setwise convergence in the actions for every fixed state variable, in addition to ergodicity conditions. We show that the mismatch error due to the application of a control policy designed for an incorrectly estimated model to the true model decreases to zero as the incorrect model approaches the true model under the stated convergence criteria (Theorem \ref{weak_robust}, Theorem \ref{setwise_robust} and Theorem \ref{TV_robust}). We will also establish an ergodic invariant measure argument, in Theorem \ref{equ_pol}, for cases where the average cost optimality equation cannot be established. These findings are applied to empirical consistency results in data-driven stochastic control, where continuous weak convergence will be shown to be particularly relevant building on a measure concentration analysis. Compared to the results in \cite{her89} by Hernandez-Lerma, where a uniform total variation convergence over state and control action variables of the transition kernels is provided as a sufficient condition, we show that the uniform convergence on the state variable may be relaxed under the total variation convergence, and more importantly, we establish continuity and robustness under weak convergence and setwise convergences of the transition kernels which are much more relaxed notions of convergence. 


\section{Some Examples and Convergence Criteria for Transition Kernels}\label{def_exmp}

\subsection{Convergence criteria for transition kernels}
Before presenting convergence criteria for controlled transition kernels, we first review convergence of probability measures. Three important notions of convergences for sets of probability measures to be studied in the paper are weak convergence, setwise convergence and convergence under total variation. For $N\in\N$, a sequence $\{\mu_n,n\in\N\}$ in $\mathcal{P}(\R^N)$ is said to converge to $\mu\in\mathcal{P}(\R^N)$ \emph{weakly} if
  \begin{align}\label{converge}
    \int_{\R^N}c(x)\mu_n(dx) \to \int_{\R^N}c(x)\mu(dx)
  \end{align}
  \noindent for every continuous and bounded $c:\R^N \to \R$. $\{\mu_n\}$ is said to converge \emph{setwise} to $\mu\in\mathcal{P}(\R^N)$ if (\ref{converge}) holds for all measurable and bounded $c:\R^N \to \R$. 

  For probability measures $\mu,\nu\in\mathcal{P}(\R^N)$, the \emph{total variation} metric is given by
  \begin{align*}
    \|\mu-\nu\|_{TV}=\sup_{f:\|f\|_\infty \leq 1}\left|\int f(x)\mu(dx)-\int f(x)\nu(dx)\right|,
  \end{align*}
  \noindent where the supremum is taken over all measurable real-valued $f$, such that  $\|f\|_\infty=\sup_{x\in\R^N}|f(x)|\leq 1$. A sequence $\{\mu_n\}$ is said to converge in total variation to $\mu\in\mathcal{P}(\R^N)$ if $\|\mu_n-\mu\|_{TV}\to 0$.
Total variation defines a stringent metric for convergence; for example, a sequence of discrete probability measures does not converge in total variation to a probability measure which admits a density function. Setwise convergence, though, induces a topology on the space of probability measures which is not metrizable \cite[p.~59]{Ghosh}. However, the space of probability measures on a complete, separable, metric (Polish) space endowed with the topology of weak convergence is itself complete, separable and metric \cite{Par67}. Building on the above, we introduce the following convergence notions for (controlled) transition kernels.
\begin{mydef}\label{def_kernel}
For a sequence of transition kernels $\{ \mathcal{T}_n,n\in\mathds{N}\}$, we say that
\begin{enumerate}
\item $\mathcal{T}_n \to  \mathcal{T}$ weakly if $ \mathcal{T}_n(\cdot|x,u) \to  \mathcal{T}(\cdot|x,u)$ weakly, for all $x \in \mathds{X}$ and $u \in \mathds{U}$.
\item $\mathcal{T}_n \to  \mathcal{T}$ setwise if $ \mathcal{T}_n(\cdot|x,u) \to  \mathcal{T}(\cdot|x,u)$ setwise, for all $x \in \mathds{X}$ and $u \in \mathds{U}$.
\item $\mathcal{T}_n \to \mathcal{T}$ under the total variation distance if $ \mathcal{T}_n(\cdot|x,u) \to  \mathcal{T}(\cdot|x,u)$ under total variation, for all $x \in \mathds{X}$ and $u \in \mathds{U}$.
\end{enumerate}
\end{mydef}
We also note here that relative entropy convergence, through Pinsker's inequality \cite[Lemma 5.2.8]{GrayInfo}, is stronger than even total variation convergence which has also been studied in robust stochastic control as reviewed earlier. Another metric for probability measures is the Wasserstein distance: For compact spaces, the Wasserstein distance of order $1$, denoted by $W_1$, metrizes the weak topology (see \cite[Theorem 6.9]{villani2008optimal}). For non-compact spaces convergence in the $W_1$ metric implies weak convergence (in particular this metric bounds from above the Bounded-Lipschitz metric \cite[p.109]{villani2008optimal}). Considering these relations, our results in this paper can be directly generalized to the relative entropy distance or the Wasserstein metric. 

\subsection{Examples and Applications}\label{ornek1}
In the following we give some examples to show what these convergence types mean in terms of the functional representation of the dynamics. Some of the examples are taken from \cite{kara2020robustness} and presented here for completeness. Some applications in particular in the context of linear systems can be found in \cite{aastrom2013adaptive,kumar2015stochastic,goodwin1981discrete,goodwin2014adaptive}. Let a controlled model be given as
\[x_{t+1}=F(x_t,u_t,w_t),\] where $\{w_t\}$ is an i.i.d. noise process. The uncertainty on the transition kernel for such a system may arise from lack of information on $F$ or the i.i.d. noise process $w_t$:
\begin{itemize}
\item[(i)] Let $\{F_n\}$ denote an approximating sequence for $F$, so that $F_n(x,u,w) \to F(x,u,w)$ pointwise. Assume that the probability measure of the noise is known. Then, the corresponding kernels $\mathcal{T}_n$ converge weakly to $\mathcal{T}$: If we denote the probability measure of $w$ with $\mu$, for any $g \in C_b(\mathds{X})$, where $C_b(\mathds{X})$ denotes the continuous and bounded functions on $\mathds{X}$, and for any $(x_0,u_0)\in \mathds{X}\times\mathds{U}$ using the dominated convergence theorem we have
\begin{align*}
&\lim_{n \to \infty}\int g(x_1)\mathcal{T}_n(dx_1|x_0,u_0) \\
&=\lim_{n \to \infty}\int g(F_n(x_0,u_0,w)) \mu(dw)\\
&=\int g(F(x_0,u_0,w)) \mu(dw) = \int g(x_1)\mathcal{T}(dx_1|x_0,u_0).
\end{align*}

\item[(ii)] Much of the robust control literature deals with deterministic systems, where the actual model is a deterministic perturbation of the nominal model (see e.g. \cite{xie92,savkin1996robust,aastrom2013adaptive,goodwin1981discrete}). The considered model is in the following form; $\tilde{F}(x_t,u_t) = F(x_t,u_t) + \Delta F(x_t,u_t)$, where $F$ represents the nominal model and $\Delta F$ is the model uncertainty satisfying some norm bounds. For such deterministic systems, pointwise convergence of $\tilde{F}$ to the nominal model $F$, i.e. $\Delta F(x_t,u_t)\to 0$, can be viewed as weak convergence for deterministic systems by the discussion in (i). It is evident, however, that total variation convergence would be too strong for such a convergence criterion, since  $\delta_{\tilde{F}(x_t,u_t)} \to \delta_{F(x_t,u_t)}$ weakly but $\|\delta_{\tilde{F}(x_t,u_t)}-\delta_{F(x_t,u_t)}\|_{TV}=2$ for all $\Delta F(x_t,u_t)> 0$.

\item[(iii)]
Let $F(x_t,u_t,w_t) = f(x_t,u_t) + w_t$ be such that the function $f$ is known and the probability law $\mu$ of $w_t$ is misspecified, such that an incorrect model $\mu_n$ is assumed. If $\mu_n \to \mu$ weakly, setwise or in total variation, then the corresponding transition kernels $\mathcal{T}_n$ converges in the same sense to $\mathcal{T}$. Observe the following,
\begin{align}\label{exmp1}
&\int g(x_1)\mathcal{T}_n(dx_1|x_0,u_0) - \int g(x_1)\mathcal{T}(dx_1|x_0,u_0)\nonumber\\
&=\int g(w_0+f(x_0,u_0))\mu_n(dw_0) \nonumber\\
& \qquad - \int g(w_0+f(x_0,u_0))\mu(dw_0).
\end{align}

\begin{itemize}
\item[a.]Suppose $\mu_n \to \mu$ weakly. If $g$ is a continuous and bounded function then $g(\cdot + f(x_0,u_0))$ is a continuous and bounded function for all $(x_0,u_0)\in \mathds{X}\times\mathds{U}$. Thus, (\ref{exmp1}) goes to 0. Note that $f$ does not need to be continuous.

\item[b.]Suppose $\mu_n \to \mu$ setwise. If $g$ is a measurable and bounded function, then $g(\cdot +f(x_0,u_0))$ is measurable and bounded for all $(x_0,u_0)\in \mathds{X}\times\mathds{U}$. Thus, (\ref{exmp1}) goes to 0.

\item[c.]Finally, assume  $\mu_n \to \mu$ in total variation. If $g$ is bounded, (\ref{exmp1}) converges to 0, as in item (b). A special case for this would be the following: Assume $\mu_n$ and $\mu$ admit densities $h_n$ and $h$ respectively, then the pointwise convergence of $h_n$ to $h$ implies the convergence of $\mu_n$ to $\mu$ in total variation by Scheffe's Theorem \cite{BillingsleyProbMeasure}.  
\end{itemize}
\item[(v)]Suppose now neither $F$ nor the probability model of $w_t$ is known perfectly. It is assumed that $w_t$ admits a measure $\mu_n$ and $\mu_n\to \mu$ weakly. For the function $F$ we again have an approximating sequence $\{F_n\}$. If $F_n(x,u,w_n) \to F(x,u,w)$ for all $(x,u) \in\mathds{X}\times\mathds{U}$ and for any $w_n \to w$, then the transition kernel $\mathcal{T}_n$ corresponding to the model $F_n$ converges weakly to the one of $F$, $\mathcal{T}$: For any $g \in C_b(\mathds{X})$,
\begin{align*}
&\lim_{n \to \infty}\int g(x_1)\mathcal{T}_n( dx_1|x_0,u_0) =\lim_{n \to \infty}\int g(F_n(x_0,u_0,w)) \mu_n(dw)\\
& =\int g(F(x_0,u_0,w)) \mu(dw)= \int g(x_1)\mathcal{T}(dx_1|x_0,u_0). 
\end{align*}
In the analysis above we used a generalized dominated convergence result, Lemma \ref{langen}, to be presented later building on \cite[Theorem 3.5]{Langen81} and \cite[Theorem 3.5]{serfozo82}. 

\item[(vi)] Suppose that $F(x,u,\cdot): \mathds{W}\to \mathds{X}$ is invertible for all fixed $(x,u)$ and $F(x,u,w)$ is continuous and bounded on $\mathds{X}\times\mathds{U}\times\mathds{W}$. We construct the empirical measures for the noise process $w_t$  such that for every (fixed) Borel $B \subset \mathds{W}$, and for every $n \in\mathds{N}$, the empirical occupation measures are
\begin{align*}\label{inverse_emp}
\mu_n(B)=\frac{1}{n-1}\sum_{i=1}^{n} \mathds{1}_{\{F^{-1}_{x_{i-1},u_{i-1}}(x_i) \in B\} }
\end{align*}
where $F^{-1}_{x_{i-1},u_{i-1}}(x_i)$ denotes the inverse of $F(x_{i-1},u_{i-1},w): \mathds{W}\to \mathds{X}$ for given $(x_{i-1},u_{i-1})$. 
Using the noise measurements, we construct the empirical transition kernel estimates for any $(x_0,u_0)$ and Borel $B$ as
\begin{align*}
\mathcal{T}_n(B|x_0,u_0)=\mu_n(F^{-1}_{x_0,u_0}(B)).
\end{align*}
We have $\mu_n \to \mu$ weakly with probability one (\cite{Dudley02}, Theorem 11.4.1). This also implies that the transition kernels are such that $\mathcal{T}_n(\cdot|x_n,u_n)\to\mathcal{T}(\cdot|x,u)$ weakly for any $(x_n,u_n) \to (x,u)$. To see that observe the following for $h \in C_b(\mathds{X})$,
\begin{align*}
&\int h(x_1)\mathcal{T}_n(dx_1|x_n,u_n) - \int h(x_1)\mathcal{T}(dx_1|x,u)\\
&=\int h(f(x_n,u_n,w))\mu_n(dw) - \int h(f(x,u,w))\mu(dw)\to0.
\end{align*}
For the last step, we used that $\mu_n \to \mu$ weakly and $h(f(x_n,u_n,w))$ continuously converge to $h(f(x,u,w))$ i.e. $h(f(x_n,u_n,w_n)) \to h(f(x,u,w)$ for some $w_n \to w$ since $f$ and $h$ are continuous functions.


\end{itemize}


\section{Supporting Results: Continuity under Convergence of Transition Kernels}\label{TV_section}

\subsection{Some differences with the infinite horizon discounted problem}

 In \cite{kara2020robustness}, we studied continuity of infinite horizon discounted cost problem under the convergence of transition kernels. In the following, we first show that the sufficient conditions presented in \cite{kara2020robustness} to guarantee the continuity may not be sufficient for the infinite horizon average cost setup. 

Let the infinite horizon discounted cost under a policy $\gamma$ and the optimal cost be defined by
  \[J_{\beta}(\mathcal{T},\gamma):= E_P^{\mathcal{T},\gamma}\left[\sum_{t=0}^{\infty} \beta^t c(X_t,U_t)\right]\]
  \[ J_{\beta}^*(\mathcal{T}):=\inf_{\gamma\in\Gamma} J_{\beta}(\mathcal{T},\gamma).\]

The following theorem gives sufficient conditions to guarantee the continuity of optimal discounted cost function under weak convergence of transition kernels. 
\begin{theorem} \cite[Theorem 5.2]{kara2020robustness} \label{fully_weak_suff_cont}
Assume the following assumptions hold:
\begin{itemize}
 \item[(i)] $\mathcal{T}_n(\cdot|x_n,u_n)\to\mathcal{T}(\cdot|x,u)$ weakly for every $x\in\mathds{X}$ and $u\in\mathds{U}$ and $(x_n,u_n)\to (x,u)$,
\item[(ii)] $\mathcal{T}(\cdot|x,u)$ is weakly continuous in $(x,u)$, 
\item[(iii)] 
The cost function $c(x,u)$ is continuous and bounded in $\mathds{X}\times\mathds{U}$, 
\item[(iv)] The action space $\mathds{U}$ is compact. 
\end{itemize}
Then $J_\beta(\mathcal{T}_n,\gamma_n^*)\to J_\beta(\mathcal{T},\gamma^*)$, for any initial state $x_0$, as $n\to \infty$.
\end{theorem}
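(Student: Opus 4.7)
The plan is to reduce the infinite-horizon problem to a finite-horizon one and then argue by induction. Since $c$ is bounded by some $M$ and $\beta \in (0,1)$, defining $V_N^{\mathcal{T}}(x) := \inf_\gamma E_x^{\mathcal{T},\gamma}\bigl[\sum_{t=0}^{N-1} \beta^t c(X_t,U_t)\bigr]$ one has the uniform tail bound
\[
|J_\beta^*(\mathcal{T})(x) - V_N^{\mathcal{T}}(x)| \leq \frac{\beta^N M}{1-\beta},
\]
which is independent of $\mathcal{T}$ and $x$. Hence it suffices to show, for every fixed $N$, that $V_N^{\mathcal{T}_n}(x_0) \to V_N^{\mathcal{T}}(x_0)$; the tail bound then closes the proof by a standard $\varepsilon/3$ argument.

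I would prove this finite-horizon convergence by induction on $N$, strengthened as follows: $V_N^{\mathcal{T}}$ is continuous and bounded on $\mathds{X}$, and $V_N^{\mathcal{T}_n}(x_n) \to V_N^{\mathcal{T}}(x)$ whenever $x_n \to x$ (continuous convergence). The base case $N=0$ is trivial. For the inductive step, write the Bellman recursion
\[
V_{N+1}^{\mathcal{T}_n}(x) = \min_{u \in \mathds{U}}\Bigl\{c(x,u) + \beta \int V_N^{\mathcal{T}_n}(x')\, \mathcal{T}_n(dx'|x,u)\Bigr\}.
\]
The heart of the argument is showing that the bracketed integrand converges continuously in $(x,u)$. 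This combines assumption (i), the continuous weak convergence $\mathcal{T}_n(\cdot|x_n,u_n) \to \mathcal{T}(\cdot|x,u)$ along sequences $(x_n,u_n) \to (x,u)$, with the inductive hypothesis that $V_N^{\mathcal{T}_n}$ converges continuously to a bounded continuous limit; together they feed into the generalized dominated convergence result from \cite[Theorem 3.5]{Langen81} and \cite[Theorem 3.5]{serfozo82} already invoked in the excerpt, yielding
\[
\int V_N^{\mathcal{T}_n}(x')\, \mathcal{T}_n(dx'|x_n,u_n) \to \int V_N^{\mathcal{T}}(x')\, \mathcal{T}(dx'|x,u).
\]
Boundedness is preserved by the tail estimate above, and continuity of the limit function $V_{N+1}^{\mathcal{T}}$ follows from assumption (ii) combined with Berge's maximum theorem and the compactness of $\mathds{U}$ (assumption (iv)).

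The remaining step is to pass the limit in $n$ through the minimization over $u$. Given continuous convergence of the bracketed expression on $\mathds{X}\times\mathds{U}$ and compactness of $\mathds{U}$, I would extract near-minimizers $u_n^\star$ for $\mathcal{T}_n$ at $x_n$; along a convergent subsequence $u_n^\star \to u^\star$, the continuous convergence forces $u^\star$ to be a minimizer for $\mathcal{T}$, from which convergence of the minima follows by a standard $\liminf$/$\limsup$ sandwich. The main obstacle is precisely this interchange of limit and minimization: pointwise weak convergence of the kernels would be insufficient, and it is the strength of assumption (i) (continuous weak convergence simultaneously in state and action), reinforced by the inductive continuous convergence of the value iterates and by compactness of $\mathds{U}$, that makes the argument go through. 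All other ingredients, including boundedness of $c$ and the finite-horizon truncation, are routine.
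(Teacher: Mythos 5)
Your proposal is correct and takes essentially the same route as the proof of this result in \cite{kara2020robustness} (the paper itself only cites it): a geometric tail bound reduces to finite horizons, and an induction establishes continuous convergence of the value iterates by feeding assumption (i) and the uniformly bounded, continuously convergent iterates into the generalized dominated convergence result of \cite{Langen81,serfozo82} (Lemma \ref{langen}), with compactness of $\mathds{U}$ and a near-minimizer/subsequence argument to pass the limit through the minimization. I see no gaps.
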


In the following, we show that these conditions in Theorem \ref{fully_weak_suff_cont} may not be sufficient for continuity for average cost problems. The following example shows that even for a control-free setup even if the assumptions of the above theorem hold, infinite horizon average cost function is not continuous. This example also covers the controlled case via using trivial control.
\begin{exmp}
Assume that $x_0=0$ and the transition kernels are given by
\begin{align*}
&\mathcal{T}(\cdot|x)= \delta_{x}(\cdot),\quad \mathcal{T}_n(\cdot|x)= \delta_{x+\frac{1}{n}}(\cdot).
\end{align*}
The cost function is given as
\begin{align*}
c(x)=&\begin{cases} |x| \quad \text{ if } |x|\leq1,\\
1  \quad \text{ if } |x|>1.\end{cases}
\end{align*}
Notice that $\mathcal{T}_n(\cdot|x_n)\to\mathcal{T}(\cdot|x)$ weakly for any $x_n\to x$ and $\mathcal{T}(\cdot|x)$ is weakly continuous. It is easy to see that the cost for $\mathcal{T}$ is 0 as the state always stays at $0$ that is $J_\infty(\mathcal{T})=0$. The cost for $\mathcal{T}_n$ can be calculated as follows:
\begin{align*}
&J_\infty(\mathcal{T}_n)=\lim_{N \to \infty}\frac{1}{N}\bigg(\sum_{k=0}^{n}\frac{k}{n} + \sum_{k=n +1}^{N}1\bigg)\\
&=\lim_{N \to \infty}\frac{1}{N}\bigg(\frac{n+1}{2}+N-n -1\bigg)=1 \neq 0.
\end{align*}
\hfill $\diamond$
\end{exmp}


\subsection{Ergodicity properties of controlled Markov chains}

In the following we will denote the set of all stationary policies by $\Gamma_s$ . For the transitions under some stationary policy $\gamma$, we will use the following notation: 
$\mathcal{T}(\cdot|x,\gamma):=\mathcal{T}(\cdot|x,\gamma(x))$.

We also define the $t$-step transition kernel $\mathcal{T}^t(\cdot|x,\gamma)$ in an iterative fashion as follows:
\begin{align*}
&\mathcal{T}^t(\cdot|x,\gamma):=\int \mathcal{T}(\cdot|x_{t-1},\gamma)\mathcal{T}^{t-1}(dx_{t-1}|x,\gamma),
\end{align*}
where $\mathcal{T}^1(\cdot|x,\gamma)=\mathcal{T}(\cdot|x,\gamma)$. 

We will use the following ergodicity condition for some of our results.
\begin{assumption}\label{geo_erg}
For every stationary policy $\gamma$, the transition kernels $\mathcal{T}$ and $\mathcal{T}_n$ lead to positive Harris recurrent chains and in particular admit invariant measures $\pi_\gamma$ and $\pi_\gamma^n$, and for these invariant measures uniformly for every initial point $x\in\mathds{X}$ we have:
\begin{align*}
&\lim_{t\to\infty}\sup_{\gamma\in\Gamma_s}\|\mathcal{T}^{t}(\cdot|x,\gamma)-\pi_\gamma(\cdot)\|_{TV}= 0\\
&\lim_{t\to\infty}\sup_n\sup_{\gamma\in\Gamma_s}\|\mathcal{T}_n^{t}(\cdot|x,\gamma)-\pi^n_\gamma(\cdot)\|_{TV}= 0.
\end{align*}
\end{assumption}
For the results in this paper, we will make use of Assumption \ref{geo_erg}. However, using \cite[Theorem 3.2]{HeMoRo91}, presented in the appendix as Theorem \ref{erg_conds}, alternative conditions can also be used. Notice that condition Theorem \ref{erg_conds}(i) is the same as Assumption \ref{geo_erg}, thus one can make use of different assumptions through the relations provided in Theorem \ref{erg_conds}.

\subsection{Optimality of stationary policies}

For our continuity and robustness results, it will be instrumental to work with stationary policies. This will be without any loss under mild conditions to be presented in this subsection.  An approach for average cost problems is to make use of average cost optimality equation (ACOE). To work with ACOE one usually needs contraction properties of the transition kernel.  
 The following result provides further alternative sufficient conditions on existence of optimal policies (which turn out to be stationary) for infinite horizon average cost problems.

\begin{assumption} \label{her_stat_assmp}
\begin{itemize}
\item [(A)] The condition f in Theorem \ref{erg_conds} (or any other suitable condition among a-i) holds, 
\item[(B)] The action space $\mathds{U}$ is compact,
\item[(C)] $c(x,u)$ is bounded and continuous in $(x,u)$,
\item[(C')] $c(x,u)$ is bounded and continuous in $u$ for every fixed $x$,
\item[(D)] $\mathcal{T}(\cdot|x,u)$ is weakly continuous in $(x,u)$,
\item [(D')]  $\mathcal{T}(\cdot|x,u)$ is setwise continuous in $u$ for every $x$.
\end{itemize}
\end{assumption}

\begin{theorem}\label{stat_opt}\cite[Corollary 3.6]{her89}
Suppose Assumption \ref{her_stat_assmp}  A, B, and, either C and D, or C' and D', hold. Then $J_\infty(\mathcal{T},\gamma)$ admits an optimal stationary policy.
$\hfill\diamond$
\end{theorem}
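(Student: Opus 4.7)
The natural route is the vanishing-discount approach: establish an Average Cost Optimality Equation (ACOE) as a limit of discounted dynamic programming equations as $\beta\uparrow 1$, and then extract a stationary optimal policy by a measurable selection theorem. For each $\beta\in(0,1)$, Assumption B together with either (C,D) or (C',D') guarantees existence of an optimal stationary policy $\gamma_\beta^*$ and a bounded measurable value function $V_\beta$ solving the discounted Bellman equation by standard Borel MDP dynamic programming; here one uses the continuity (or setwise continuity in $u$) of the one-stage operator combined with compactness of $\mathds{U}$. Fix a reference state $x_0\in\mathds{X}$ and introduce
\begin{align*}
h_\beta(x):=V_\beta(x)-V_\beta(x_0),\qquad \rho_\beta:=(1-\beta)V_\beta(x_0).
\end{align*}

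The role of the ergodicity Assumption A (equivalently, any sufficient alternative in Theorem \ref{erg_conds}) is to bound $h_\beta$ uniformly in $\beta$ and $x$. Writing $V_\beta$ as a geometric series along $\gamma_\beta^*$ and applying the triangle inequality through the invariant measure $\pi_{\gamma_\beta^*}$, one obtains
\begin{align*}
|h_\beta(x)|\le 2\|c\|_\infty\sum_{t=0}^{\infty}\beta^{t}\sup_{\gamma\in\Gamma_s}\|\mathcal{T}^{t}(\cdot|x,\gamma)-\pi_\gamma\|_{TV},
\end{align*}
and Assumption A (more precisely, the uniform-in-$x$ quantifier inherited from Theorem \ref{erg_conds}, which produces a Doeblin/geometric rate) forces the right-hand side to be uniformly finite. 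Consequently $\{\rho_\beta\}$ is bounded, and along a subsequence $\beta_k\uparrow 1$ one can extract pointwise limits $\rho=\lim_k\rho_{\beta_k}$ and $h(x)=\lim_k h_{\beta_k}(x)$.

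Passing to the limit in the discounted Bellman equation yields the ACOE
\begin{align*}
h(x)+\rho=\min_{u\in\mathds{U}}\Bigl\{c(x,u)+\int h(y)\,\mathcal{T}(dy|x,u)\Bigr\}.
\end{align*}
Under (C,D), $c$ is continuous and $\mathcal{T}$ weakly continuous; a version of $h$ can be chosen continuous, and the generalized dominated convergence Lemma \ref{langen} combined with Berge's maximum theorem delivers a continuous optimal selector. Under (C',D'), $h$ is only bounded measurable, but for each fixed $x$ the map $u\mapsto c(x,u)+\int h(y)\,\mathcal{T}(dy|x,u)$ is continuous on the compact set $\mathds{U}$ by setwise continuity of $\mathcal{T}(\cdot|x,\cdot)$, so the minimum is attained pointwise; a Kuratowski--Ryll-Nardzewski measurable selection theorem then produces a stationary policy $\gamma^*$. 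A standard verification argument (iterate the ACOE under $\gamma^*$, divide by $N$, let $N\to\infty$, and invoke Assumption A to absorb the bounded $h$-terms) identifies $\rho=J_\infty^*(\mathcal{T})$ and confirms optimality of $\gamma^*$ among \emph{all} admissible policies, not just stationary ones.

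The principal obstacle is the uniform span bound on $h_\beta$: without it the vanishing-discount passage collapses. Assumption A as stated only supplies pointwise convergence in $t$, so one must upgrade to a summable rate, which is precisely where the ``uniformly over $x$ and $\gamma$'' clauses of Theorem \ref{erg_conds} do the work. A secondary technical difficulty is the limit passage in the Bellman equation under the weaker (C',D') hypotheses, where $h$ is merely bounded measurable and classical weak-convergence tools do not apply; this is exactly where setwise continuity in $u$ is indispensable for interchanging limit, integral, and minimization over $u$.
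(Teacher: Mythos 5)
The paper does not actually prove this statement: it is imported verbatim from Hernandez-Lerma \cite[Corollary 3.6]{her89}, whose argument is the span semi-norm contraction route that the paper itself recalls later — the operator in (\ref{operator}) is a contraction in the span semi-norm under condition f of Theorem \ref{erg_conds}, its fixed point yields the ACOE (\ref{acoe}), and the pairs (C,D) or (C',D') with compact $\mathds{U}$ supply a measurable minimizing selector, which is then verified to be average-cost optimal. Your vanishing-discount argument is therefore a genuinely different route, and in principle a viable one under the same uniform (geometric) ergodicity; your bound on the span of $V_\beta$ via the invariant measure and the equivalences f $\leftrightarrow$ h of Theorem \ref{erg_conds} is correct, as is the verification step once an ACOE is in hand.

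However, there are two genuine gaps in the middle of your argument. First, from uniform boundedness of $\{h_\beta\}$ alone you cannot extract a subsequence $\beta_k\uparrow 1$ along which $h_{\beta_k}(x)$ converges for \emph{every} $x$ in an uncountable Borel set $\mathds{X}\subset\R^n$; pointwise subsequential compactness requires either equicontinuity of $\{h_\beta\}$ (Arzel\`a--Ascoli), which you have not established, or replacing the limit by $h(x):=\liminf_{\beta\uparrow 1}h_\beta(x)$ and settling for an average cost optimality \emph{inequality}, which changes the limit passage and the verification argument. Second, under (C,D) both your limit passage and your selection step hinge on the assertion that ``a version of $h$ can be chosen continuous,'' which is exactly what needs proof: if $h$ is merely bounded measurable, weak continuity of $\mathcal{T}(\cdot|x,u)$ does \emph{not} make $u\mapsto\int h(y)\,\mathcal{T}(dy|x,u)$ continuous, so attainment of the minimum and the Berge/Kuratowski--Ryll-Nardzewski selection both fail (Lemma \ref{langen} needs continuously converging integrands and so does not rescue this). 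The contraction approach of \cite{her89} sidesteps precisely this point: the iterates $T^k v_0$ starting from a continuous $v_0$ are continuous under (C,D), and geometric convergence in the span semi-norm lets the fixed point be chosen continuous up to an additive constant. Your (C',D') branch is in better shape, since setwise continuity in $u$ keeps the minimand continuous in $u$ for bounded measurable $h$, but it still inherits the first gap about the existence of the pointwise limit $h$.
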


For the rest of the paper, we will assume that the optimal policies can be selected from those which are stationary with suitable assumptions and we will denote the family of stationary polices by $\Gamma$.


\subsection{Approximation by finite horizon cost}
We denote the $t$-step finite horizon cost function under a stationary policy $\gamma$ and a transition model $\mathcal{T}$ by $J_t(\mathcal{T},\gamma)$ and the corresponding optimal cost is denoted by $J_t^*(\mathcal{T})$:
\begin{align*}
J_t(\mathcal{T,\gamma})&=\sum_{i=0}^{t-1}E_\gamma^{\mathcal{T}}[c(X_i,U_i)]\\
J_t^*(\mathcal{T})&=\inf_{\gamma\in\Gamma}J_t(\mathcal{T,\gamma}).
\end{align*}
The following result shows that the infinite horizon average cost induced by a stationary policy can be approximated by a finite cost under the same stationary policies with proper ergodicity conditions.

\begin{lemma}\label{apprx}
Under Assumption \ref{geo_erg}, if the cost function $c$ is bounded then for every initial state we have
\begin{align*}
&\sup_{\gamma\in\Gamma} \left|\frac{J_t(\mathcal{T},\gamma)}{t}-J_\infty(\mathcal{T},\gamma)\right|\to 0,\\
&\sup_{\gamma\in\Gamma}\sup_{n} \left|\frac{J_t(\mathcal{T}_n,\gamma)}{t}-J_\infty(\mathcal{T}_n,\gamma)\right|\to 0.
\end{align*}
\end{lemma}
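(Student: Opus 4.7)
The plan is to use Assumption \ref{geo_erg} to identify the per-stage average cost with an integral against the invariant measure, and then control both $J_t/t$ and $J_\infty$ by that integral with explicit totalvariation bounds that are uniform in the policy (and in $n$).

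Concretely, for any stationary policy $\gamma \in \Gamma_s$, the running cost can be written as
\[
\frac{J_t(\mathcal{T},\gamma)}{t} = \frac{1}{t}\sum_{i=0}^{t-1}\int c(x,\gamma(x))\,\mathcal{T}^i(dx\mid x_0,\gamma).
\]
Since the paper's total variation norm satisfies $|\int f\,d\mu - \int f\,d\nu|\le \|f\|_\infty\|\mu-\nu\|_{TV}$ and $c$ is bounded, setting $\epsilon_i := \sup_{\gamma\in\Gamma_s}\|\mathcal{T}^i(\cdot\mid x_0,\gamma)-\pi_\gamma\|_{TV}$, which tends to $0$ by Assumption \ref{geo_erg}, gives
\[
\sup_{\gamma\in\Gamma_s}\left|\int c(x,\gamma(x))\,\mathcal{T}^i(dx\mid x_0,\gamma) - \int c(x,\gamma(x))\,\pi_\gamma(dx)\right|\le \|c\|_\infty\,\epsilon_i.
\]
By the standard Cesaro argument, $\frac{1}{t}\sum_{i=0}^{t-1}\epsilon_i\to 0$, so
\[
\sup_{\gamma\in\Gamma_s}\left|\frac{J_t(\mathcal{T},\gamma)}{t} - \int c(x,\gamma(x))\,\pi_\gamma(dx)\right|\;\le\; \frac{\|c\|_\infty}{t}\sum_{i=0}^{t-1}\epsilon_i\;\to\;0.
\]

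Next I would show that the limiting integral coincides with $J_\infty(\mathcal{T},\gamma)$. Indeed, applying the same bound with $t$ replaced by $N$ and passing to the $\limsup$, one obtains
\[
J_\infty(\mathcal{T},\gamma) = \limsup_{N\to\infty}\frac{1}{N}\sum_{i=0}^{N-1}\int c(x,\gamma(x))\,\mathcal{T}^i(dx\mid x_0,\gamma) = \int c(x,\gamma(x))\,\pi_\gamma(dx),
\]
uniformly in $\gamma$, since the error term is controlled by the same $\frac{\|c\|_\infty}{N}\sum_{i=0}^{N-1}\epsilon_i$. Combining the two displays and applying the triangle inequality yields the first claim.

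For the second claim, I would simply run the identical argument with $\mathcal{T}$ replaced by $\mathcal{T}_n$ and with $\epsilon_i^{(n)} := \sup_n\sup_{\gamma\in\Gamma_s}\|\mathcal{T}_n^i(\cdot\mid x_0,\gamma)-\pi_\gamma^n\|_{TV}$; by Assumption \ref{geo_erg} this quantity also vanishes as $i\to\infty$, so the same Cesaro estimate delivers uniformity in both $\gamma$ and $n$. The whole argument is essentially bookkeeping once uniform ergodicity is in hand, so the only real subtlety is making sure that the policydependence of both $\pi_\gamma$ and the speed of mixing is absorbed by the suprema built into Assumption \ref{geo_erg}; this is precisely why that assumption is stated uniformly over $\gamma\in\Gamma_s$ (and over $n$) rather than pointwise.
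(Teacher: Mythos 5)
Your proof is correct and follows essentially the same route as the paper's: bound the per-stage deviation by $\|c\|_\infty\|\mathcal{T}^i(\cdot|x_0,\gamma)-\pi_\gamma\|_{TV}$ and exploit the uniformity over $\gamma$ (and $n$) in Assumption \ref{geo_erg} together with a Ces\`aro averaging, which the paper just writes out explicitly via the $t_\epsilon$, $T_\epsilon$ splitting. The only cosmetic difference is that you derive the identity $J_\infty(\mathcal{T},\gamma)=\int c(x,\gamma(x))\,\pi_\gamma(dx)$ from the same total-variation estimate, whereas the paper asserts it directly from positive Harris recurrence.
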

\begin{proof}
We have that $J_\infty(\mathcal{T},\gamma)=\int c(x,\gamma(x))\pi^\gamma(dx).$ Thus, we can write 
\begin{align*}
 &\left|\frac{J_t(\mathcal{T},\gamma)}{t}-J_\infty(\mathcal{T},\gamma)\right|\\
&=\bigg|\frac{1}{t}\sum_{i=0}^{t-1}E_\gamma^{\mathcal{T}}[c(X_i,U_i)]-\int c(x,\gamma(x))\pi^\gamma(dx)\bigg|\\
&\leq \frac{1}{t}\sum_{i=0}^{t-1}\bigg|\int c(x_i,\gamma(x_i)) \mathcal{T}^i(dx_i|x_0,\gamma)-\int c(x,\gamma(x))\pi^\gamma(dx)\bigg|\\
&\quad\leq\frac{1}{t}\sum_{i=0}^{t-1}\|c\|_\infty\|\mathcal{T}^i(\cdot|x_0,\gamma)-\pi^\gamma\|_{TV}.
\end{align*}
We now fix an $\epsilon>0$ and choose a $t_\epsilon<\infty$ such that $\|\mathcal{T}^i(\cdot|x_0,\gamma)-\pi^\gamma\|_{TV}<\epsilon$ for all $i>t_\epsilon$. We also choose another $T_\epsilon$ with $\frac{2t_\epsilon}{t}<\epsilon$ for all $t>T_\epsilon$. With this setup, we have
\begin{align*}
&\frac{1}{t}\sum_{i=0}^{t-1}\|\mathcal{T}^i(\cdot|x_0,\gamma)-\pi^\gamma\|_{TV}\\
&\leq\frac{1}{t}\sum_{i=0}^{t_\epsilon-1}\|\mathcal{T}^i_\gamma(\cdot|x_0)-\pi^\gamma\|_{TV}+\frac{1}{t}\sum_{i=t_\epsilon}^{t}\|\mathcal{T}^i_\gamma(\cdot|x_0)-\pi^\gamma\|_{TV}\\
&\leq \frac{2t_\epsilon}{t}+\epsilon\leq 2\epsilon, \qquad \forall t>T_\epsilon.
\end{align*}
We have shown that for any fixed $\epsilon>0$, we can choose a $T_\epsilon<\infty$, independent of $\gamma$, such that 
\begin{align*}
\left|\frac{J_t(\mathcal{T},\gamma)}{t}-J_\infty(\mathcal{T},\gamma)\right|<\epsilon, \quad \forall t>T_\epsilon.
\end{align*}
Hence the result is complete for $\mathcal{T}$. 

For $\mathcal{T}_n$ the result follows from the same steps since we can again choose such $t_\epsilon$ and $T_\epsilon$ due to the uniformity over $n$ and $\gamma$ in Assumption \ref{geo_erg}. \end{proof}
The next result from \cite[Corollary 4.11]{her89}  shows that the optimal infinite horizon cost can be approximated by an optimal finite horizon cost induced by the same transition kernel. 
\begin{lemma}\label{key_step}
Suppose the cost function $c$ is bounded and either Assumption \ref{her_stat_assmp} A, B, C, D or Assumption \ref{her_stat_assmp} A, B, C', D' hold (for $\mathcal{T}$ and $\mathcal{T}_n$). Then, we have
\begin{align*}
&\lim_{t\to\infty}\left|J_\infty^*(\mathcal{T})-\frac{J_t^*(\mathcal{T})}{t}\right|\to 0,\\
&\lim_{t\to\infty}\sup_n\left|J_\infty^*(\mathcal{T}_n)-\frac{J_t^*(\mathcal{T}_n)}{t}\right|\to 0.
\end{align*}
\end{lemma}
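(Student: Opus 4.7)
The proof plan is to reduce both assertions directly to Lemma~\ref{apprx} by exploiting the elementary fact that taking an infimum is a contraction in the supremum metric over the indexing class. Under the convention adopted after Theorem~\ref{stat_opt}, $\Gamma$ denotes the class of stationary policies, and Theorem~\ref{stat_opt} (applicable because either (A,B,C,D) or (A,B,C',D') hold) guarantees that $J_\infty^*(\mathcal{T})$ is attained by some stationary policy, so that $J_\infty^*(\mathcal{T}) = \inf_{\gamma \in \Gamma} J_\infty(\mathcal{T},\gamma)$, and similarly for each $\mathcal{T}_n$. Thus $J_t^*/t$ and $J_\infty^*$ are infima of $J_t(\cdot,\gamma)/t$ and $J_\infty(\cdot,\gamma)$ over the \emph{same} class $\Gamma$, which sets up the contraction argument.

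For the first claim, I would apply the standard bound
\begin{align*}
\left| \inf_{\gamma \in \Gamma} f(\gamma) - \inf_{\gamma \in \Gamma} g(\gamma) \right| \leq \sup_{\gamma \in \Gamma} |f(\gamma) - g(\gamma)|
\end{align*}
with $f(\gamma) = J_t(\mathcal{T},\gamma)/t$ and $g(\gamma) = J_\infty(\mathcal{T},\gamma)$. Part~(A) of Assumption~\ref{her_stat_assmp} provides (via Theorem~\ref{erg_conds}) the uniform ergodicity of Assumption~\ref{geo_erg}, and the cost is bounded, so Lemma~\ref{apprx} applies and forces the right-hand side to zero as $t\to\infty$. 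This yields
\begin{align*}
\left| \frac{J_t^*(\mathcal{T})}{t} - J_\infty^*(\mathcal{T}) \right| \to 0.
\end{align*}
The uniform-in-$n$ claim follows by the same inequality applied to each $\mathcal{T}_n$, since the second assertion of Lemma~\ref{apprx} yields $\sup_n \sup_{\gamma \in \Gamma} |J_t(\mathcal{T}_n,\gamma)/t - J_\infty(\mathcal{T}_n,\gamma)| \to 0$, which dominates $\sup_n |J_t^*(\mathcal{T}_n)/t - J_\infty^*(\mathcal{T}_n)|$.

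The principal (and essentially only) technical point is that Lemma~\ref{apprx} must converge \emph{uniformly} in $\gamma \in \Gamma$, and in $n$ for the second part: without this uniformity, one would have only pointwise convergence of $J_t(\mathcal{T},\gamma)/t \to J_\infty(\mathcal{T},\gamma)$, which does not in general allow one to exchange infimum and limit. The uniformity is precisely what Assumption~\ref{geo_erg} (and its $n$-uniform counterpart) delivers, so the argument never needs to touch vanishing-discount or ACOE machinery of the type used in Hernandez-Lerma's original proof; the entire content is already packaged in the previous lemma together with the contractivity of infimum under the uniform norm.
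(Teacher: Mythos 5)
Your proof is correct for the statement as the paper phrases it, but it takes a genuinely different route: the paper offers no proof of this lemma at all, importing it as \cite[Corollary 4.11]{her89}, where the approximation $J^*_t/t \to J^*_\infty$ is obtained from the span-seminorm contraction of the value-iteration operator, i.e.\ exactly the ACOE machinery you deliberately avoid. Your reduction — both $J_t^*/t$ and $J_\infty^*$ are infima of $J_t(\cdot,\gamma)/t$ and $J_\infty(\cdot,\gamma)$ over the \emph{same} class $\Gamma$ of stationary policies, and an infimum is nonexpansive in the sup metric, so the $\gamma$-uniform (and $n$-uniform) convergence of Lemma \ref{apprx} does all the work — is more elementary and makes the uniformity in $n$ completely transparent. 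Two caveats should be recorded. First, the argument hinges on reading $J_t^*(\mathcal{T})=\inf_{\gamma\in\Gamma}J_t(\mathcal{T},\gamma)$ with $\Gamma$ the stationary policies, which is indeed how the paper defines it in that subsection; if $J_t^*$ instead denoted the optimal $t$-horizon cost over all (time-varying) policies — which is what Hern\'andez-Lerma's cited corollary addresses and what the finite-horizon continuity results invoked in Theorem \ref{TV_main_cont} naturally refer to — your inequality would only give $J_t^*/t \le J_\infty^* + o(1)$, and closing the other direction would again require the span-contraction (value-iteration) argument, since the optimal finite-horizon policy need not be stationary. Second, for the second limit you need the $\sup_n$ uniformity built into Assumption \ref{geo_erg}; Assumption \ref{her_stat_assmp}(A) ``for $\mathcal{T}$ and $\mathcal{T}_n$'' delivers this only if the constants in condition f of Theorem \ref{erg_conds} can be taken uniform over $n$ — the same implicit reading the paper itself makes, so this is not a defect of your proof relative to the paper, but it is worth stating explicitly.
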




\subsection{Continuity under the convergence of transition kernels}

\begin{theorem}\label{TV_main_cont}
We have that $|J_\infty^*(\mathcal{T}_n)-J_\infty^*(\mathcal{T})|\to 0$, under 
\begin{itemize}
\item[c1.] Assumption \ref{her_stat_assmp} $A$, $B$, $C$ and $D$ if $\mathcal{T}_n(\cdot|x_n,u_n)\to\mathcal{T}(\cdot|x,u)$ weakly for any $(x_n,u_n)\to(x,u)$.
\item[c2.]  Assumption \ref{her_stat_assmp} $A$, $B$, $C'$ and $D'$ if $\mathcal{T}_n(\cdot|x,u_n)\to\mathcal{T}(\cdot|x,u)$ setwise for any $u_n\to u$ for every fixed $x$.
\end{itemize}
\end{theorem}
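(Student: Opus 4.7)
The plan is to exploit Lemma \ref{key_step} to reduce the theorem to a fixed-$t$ finite-horizon continuity question, and then to handle that question by a dynamic-programming induction. Given $\epsilon>0$, by the triangle inequality
\[|J_\infty^*(\mathcal{T}_n)-J_\infty^*(\mathcal{T})|\leq \left|J_\infty^*(\mathcal{T}_n)-\tfrac{J_t^*(\mathcal{T}_n)}{t}\right|+\left|\tfrac{J_t^*(\mathcal{T}_n)}{t}-\tfrac{J_t^*(\mathcal{T})}{t}\right|+\left|\tfrac{J_t^*(\mathcal{T})}{t}-J_\infty^*(\mathcal{T})\right|.\]
Under either hypothesis (c1) or (c2) the appropriate case of Lemma \ref{key_step} applies, and the uniformity in $n$ supplied by Assumption \ref{geo_erg} lets us choose $t=t(\epsilon)$ making the first and third summands smaller than $\epsilon/3$ for all $n$ simultaneously. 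It therefore suffices to prove that, for each fixed $t$, $J_t^*(\mathcal{T}_n)\to J_t^*(\mathcal{T})$ as $n\to\infty$.

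For this I would introduce the Bellman recursions $V_0^n\equiv V_0\equiv 0$ and, for $k\ge 1$,
\[V_k^n(x)=\inf_{u\in\mathds{U}}\left[c(x,u)+\int V_{k-1}^n(y)\,\mathcal{T}_n(dy|x,u)\right],\]
with $V_k$ defined analogously from $\mathcal{T}$, so that $V_t^n(x_0)=J_t^*(\mathcal{T}_n)$ and $V_t(x_0)=J_t^*(\mathcal{T})$. I would then prove inductively that under (c1), $V_k^n\to V_k$ continuously with $V_k$ continuous and bounded, and that under (c2), $V_k^n(x)\to V_k(x)$ pointwise with $V_k$ bounded measurable and $u\mapsto c(x,u)+\int V_{k-1}(y)\mathcal{T}(dy|x,u)$ continuous on $\mathds{U}$ for every fixed $x$. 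Uniform boundedness is immediate from boundedness of $c$. The essential inductive ingredient is the exchange of limits under the integral. In case (c1), the generalized dominated convergence result (Lemma \ref{langen}, cf.\ \cite[Theorem 3.5]{Langen81} and \cite[Theorem 3.5]{serfozo82}) applied to the continuous convergence of the bounded integrand $V_{k-1}^n$ together with the continuous weak convergence of $\mathcal{T}_n$ yields $\int V_{k-1}^n\,d\mathcal{T}_n(\cdot|x_n,u_n)\to\int V_{k-1}\,d\mathcal{T}(\cdot|x,u)$ whenever $(x_n,u_n)\to(x,u)$; compactness of $\mathds{U}$ combined with a Berge-type extraction of approximate minimizers then transfers this convergence to the infima. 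In case (c2), for fixed $x$ and $u_n\to u$, I would invoke Egorov's theorem with respect to the probability measure $\mathcal{T}(\cdot|x,u)$ to produce a measurable set $B$ on which $V_{k-1}^n\to V_{k-1}$ uniformly and with $\mathcal{T}(B^c|x,u)$ arbitrarily small, then split the integral over $B$ and $B^c$: the $B$-contribution is controlled by uniform convergence on $B$ combined with the setwise convergence $\mathcal{T}_n(\cdot|x,u_n)\to\mathcal{T}(\cdot|x,u)$ tested against the fixed bounded function $\mathbf{1}_B V_{k-1}$, and the $B^c$-contribution is bounded by $\|V_{k-1}\|_\infty\bigl(\mathcal{T}_n(B^c|x,u_n)+\mathcal{T}(B^c|x,u)\bigr)$, the first summand of which converges to the second by setwise convergence tested on $\mathbf{1}_{B^c}$.

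The main obstacle I expect is precisely this setwise induction step: unlike the continuous weak setting, setwise convergence does not come packaged with a Serfozo-type theorem that handles a jointly varying integrand and measure, so the Egorov decomposition must be tailored to decouple variation in the measure from variation in the integrand without appealing to any spatial continuity of $V_{k-1}^n$ in $x$, which is not available under (c2). A secondary technical point is the inductive verification that the pre-infimum integrand remains continuous in $u$ (case (c2)) or jointly continuous in $(x,u)$ (case (c1)), which one needs in order to select a minimizer from the compact set $\mathds{U}$ in the Berge step; under (c2) this rests on condition D$'$ of Assumption \ref{her_stat_assmp} applied to the inductive regularity of $V_{k-1}$ together with the continuity-in-$u$ of $c$ from C$'$.
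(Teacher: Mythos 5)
Your proof is correct and follows essentially the same route as the paper: the identical triangle-inequality reduction via Lemma \ref{key_step}, followed by continuity of the fixed-$t$ finite-horizon optimal cost for each of the two convergence notions. The only difference is that where the paper simply cites \cite[Theorem 4.2]{kara2020robustness} and \cite[Section 5.3]{kara2020robustness} for that finite-horizon step, you prove it directly by value iteration, using Lemma \ref{langen} in the weak case and an Egorov splitting (equivalent to the setwise dominated-convergence result \cite[Theorem 20]{royden} invoked elsewhere in the paper) in the setwise case; both arguments are sound.
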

\begin{proof}
We use the following bound:
\begin{align*}
&|J_\infty^*(\mathcal{T}_n)-J_\infty^*(\mathcal{T})|  \\
& \quad \leq \left|J_\infty^*(\mathcal{T}_n)-\frac{J_t^*(\mathcal{T}_n)}{t}\right| \\
&\quad  +\left|\frac{J_t^*(\mathcal{T}_n)}{t}-\frac{J_t^*(\mathcal{T})}{t}\right|+\left|\frac{J_t^*(\mathcal{T})}{t}-J_\infty^*(\mathcal{T})\right|.
\end{align*}
The first and the last terms above can be made arbitrarily small by choosing $t$ large enough uniformly over $n$ using Lemma \ref{key_step} under suitable assumptions. For the second term, we can use continuity results for finite time problems for the fixed $t$ as the assumptions cover the requirements of  (\cite[Theorem 4.2]{kara2020robustness} or \cite[Section 5.3]{kara2020robustness}).
\end{proof}

\begin{remark}
We note that any condition set provided for continuity of the optimal cost function under setwise convergence of the transition kernels is also a sufficient set of conditions for the continuity of the optimal cost function under total variation convergence of the transition kernels. 
\end{remark}

\section{Robustness to Incorrect Controlled Transition Kernel Models}
In this section, we investigate robustness for infinite horizon average cost problems. We first restate the problem: Consider a MDP with transition kernel $\mathcal{T}_n$, and assume that an optimal control policy for this MDP under the average cost criterion is $\gamma_n^*$, that is
\begin{align*}
\inf_{\gamma\in\Gamma}J_\infty(\mathcal{T}_n,\gamma)=J_\infty(\mathcal{T}_n,\gamma_n^*).
\end{align*}
Now, consider another MDP with transition kernel $\mathcal{T}$ whose the optimal cost denoted by $J_\infty^*(\mathcal{T})$. The question we ask is the following: if the controller does not know the true transition kernel $\mathcal{T}$ and calculates an optimal policy assuming the transition kernel is $\mathcal{T}_n$, then the incurred cost by this policy is $J_\infty(\mathcal{T},\gamma_n^*)$. The focus of this section is to find sufficient conditions such that as $\mathcal{T}_n\to\mathcal{T}$, 
\begin{align*}
J_\infty(\mathcal{T},\gamma_n^*)\to J_\infty(\mathcal{T},\gamma^*).
\end{align*}
The first issue with this question is the following one: assume that the MDP with kernel $\mathcal{T}_n$ admits two different optimal policies $\gamma_n^1$ and $\gamma_n^2$. Although, the cost incurred by these policies under the kernel $\mathcal{T}_n$ are the same, under the kernel $\mathcal{T}$ they may have different cost values. That is, even though we have that 
\begin{align*}
J_\infty(\mathcal{T}_n,\gamma_n^1)=J_\infty(\mathcal{T}_n,\gamma_n^2)=J_\infty^*(\mathcal{T}_n),
\end{align*}
we may have $J_\infty(\mathcal{T},\gamma_n^1)\neq J_\infty(\mathcal{T},\gamma_n^2)$. An example is as follows:  Consider a system with state space $\mathds{X}=[-1,1]$, control action space $\mathds{U}=\{-1,0,1\}$, the cost function $c(x,u)=(x-u)^2$ and the transition models given as 
\begin{align*}
&\mathcal{T}_n(\cdot|x,u)=\frac{1}{2}\delta_1(\cdot)+\frac{1}{2}\delta_{-1}(\cdot)\\
&\mathcal{T}(\cdot|x,u)=\delta_0(\cdot)
\end{align*}
Notice that two optimal policies for $\mathcal{T}_n$ are
\begin{align*}
\gamma_n^1(x)=&\begin{cases} 1 \quad \text{ if } x=1,\\
-1  \quad \text{ if } x=-1,\\
0 \quad \text{ else}.\end{cases}
\gamma_n^2(x)=\begin{cases} 1 \quad \text{ if } x\geq0,\\
-1  \quad \text{ if } x<0.\end{cases}
\end{align*}
However, if the initial point is $x_0=0$, we have that $J_\infty(\mathcal{T},\gamma_n^1)=0\neq1=J_\infty(\mathcal{T},\gamma_n^2)$.

In what follows, we show that under total variation convergence of $\mathcal{T}_n \to \mathcal{T}$, this issue does not cause a problem so that we have $J_\infty(\mathcal{T},\gamma_n^*)\to J_\infty(\mathcal{T},\gamma^*)$ for any stationary optimal policy $\gamma_n^*$. However, under weak or convergence of the transition models, we establish the same result under some particularly constructed optimal polices $\gamma_n^*$, namely we focus on the policies that solve the average cost optimality equation (ACOE).

\subsection{Robustness under weak convergence of transition kernels}

\subsubsection*{The Average Cost Optimality Equation (ACOE)}
Now, we discuss the average cost optimality equation, and we use it for analyzing robustness properties of MDPs under weak convergence of the transition kernels. Define the operator $T: B(\mathds{X})\to B(\mathds{X})$ where $B(\mathds{X})$ denotes the set of bounded and measurable functions on $\mathds{X}$ such that for $v \in B(\mathds{X})$
\begin{align}\label{operator}
Tv(x):=\inf_{u\in\mathds{U}}\bigg(c(x,u)+\int_{\mathds{X}} v(y)\mathcal{T}(dy|x,u)\bigg).
\end{align}
We define the span semi-norm of a function $v\in B(\mathds{X})$ by
\begin{align*}
\mathop{sp}(v):= \sup_x v(x)-\inf_x v(x).
\end{align*}
One can show that the operator defined in (\ref{operator}) is a contraction in $B(\mathds{X})$ under the span-norm with Assumption \ref{geo_erg} or any suitable one from Assumption \ref{erg_conds} \cite[Lemma 3.5]{her89}. Hence, according to the Banach fixed point theorem there exists a fixed point $v^*\in B(\mathds{X})$ such that $\mathop{sp}(Tv^* - v^*)=0$. By the definition of the span-norm, $Tv^*(x)-v^*(x)=j^*$ for a constant $j^*$ for all $x\in \mathds{X}$. That is
\begin{align}\label{acoe}
j^*+v^*(x)=\inf_{u\in\mathds{U}}\bigg(c(x,u)+\int_{\mathds{X}} v^*(y)\mathcal{T}(dy|x,u)\bigg).
\end{align}
This constant $j^*$ is the optimal infinite horizon average cost, and equation (\ref{acoe}) is called the average cost optimality equation (ACOE). For the remainder of this section, we will sometimes use the notation $J_\infty(\mathcal{T},\gamma,x)$ or $J_\infty^*(\mathcal{T},x)$ for the expected average cost or for the expected optimal cost when the process starts from the initial state $x$ in order to emphasize the initial state.

Now, we formalize these observations: 
\begin{theorem}\cite{her89}
Suppose the cost function $c$ is bounded. Under Assumption \ref{geo_erg}, there exists a $\beta<1$ such that the following holds:
\begin{itemize}
\item[(i)] $\mathop{sp}\big(Tv-Tw\big)\leq \beta \mathop{sp}(u-w)$, for any $v,w\in B(\mathds{X})$ where $T$ is the operator defined in (\ref{operator}).
\item[(ii)] Since $T$ is a contraction under the span norm, it admits a fixed point $v^* \in B(\mathds{X})$ such that 
\begin{align*}
j^*+v^*(x)=\inf_{u\in\mathds{U}}\bigg(c(x,u)+\int_{\mathds{X}} v^*(y)\mathcal{T}(dy|x,u)\bigg),
\end{align*}
for some constant $j^*$. 
\item[(iii)] For any initial point $x_0\in \mathds{X}$, the constant $j^*$ defined in $(ii)$ is the optimal infinite horizon average cost for the kernel $\mathcal{T}$, that is
\begin{align*}
j^*=J^*_\infty(\mathcal{T},x_0)=\inf_{\gamma\in\Gamma}J_\infty(\mathcal{T},\gamma,x_0)
\end{align*}
for every $x_0\in \mathds{X}$.
\item[(iv)] If there exists a policy $\gamma^*\in \Gamma$ satisfying the ACOE, then this stationary policy is an optimal policy for the average infinite horizon cost problem; that is, if $\gamma^*$ satisfies
\begin{align*}
j^*+v^*(x)=c(x,\gamma^*(x))+\int_{\mathds{X}} v^*(y)\mathcal{T}(dy|x,\gamma^*(x)),
\end{align*}
then $J_\infty(\mathcal{T},\gamma^*,x_0)=J_\infty^*(\mathcal{T},x_0)$.
\end{itemize}
\end{theorem}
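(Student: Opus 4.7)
The plan is to prove the four parts in order, since each builds on the previous one. The heart of the argument is the span-seminorm contraction in (i); once that is in hand, (ii) is Banach fixed point, (iii) is iterating the fixed point equation and dividing by $n$, and (iv) is a direct verification that the ACOE selector achieves the cost $j^*$.

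For part (i), I would exploit Assumption \ref{geo_erg} to produce a Dobrushin-type coefficient. Specifically, by the triangle inequality,
\begin{align*}
\tfrac{1}{2}\|\mathcal{T}^t(\cdot|x,\gamma)-\mathcal{T}^t(\cdot|y,\gamma)\|_{TV} &\leq \tfrac{1}{2}\|\mathcal{T}^t(\cdot|x,\gamma)-\pi_\gamma\|_{TV} + \tfrac{1}{2}\|\mathcal{T}^t(\cdot|y,\gamma)-\pi_\gamma\|_{TV},
\end{align*}
and uniform ergodicity (uniform in $\gamma$ and in the initial point) forces the supremum of the left side over $x,y,\gamma$ to some $\alpha<1$ for $t$ large enough. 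Now observe that for any $v,w \in B(\mathds{X})$ and any $u$,
\begin{align*}
Tv(x)-Tw(x) \leq \int (v-w)(y)\,\mathcal{T}(dy|x,u_w^*(x)),
\end{align*}
where $u_w^*(x)$ is a (near-)minimizer in the definition of $Tw$; a symmetric bound gives the reverse inequality with $u_v^*$. Iterating to $T^t$ and taking the difference of the pointwise sup and inf, one recovers the standard inequality
\begin{align*}
\mathop{sp}(T^t v - T^t w) \leq \Big(\sup_{x,y,\gamma} \tfrac{1}{2}\|\mathcal{T}^t(\cdot|x,\gamma)-\mathcal{T}^t(\cdot|y,\gamma)\|_{TV}\Big)\mathop{sp}(v-w) \leq \alpha\,\mathop{sp}(v-w),
\end{align*}
which yields a contraction with $\beta := \alpha^{1/t} < 1$ after re-indexing (or, equivalently, applying Banach's theorem directly to the iterate $T^t$). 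This is the main technical step.

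For part (ii), since $\mathop{sp}(\cdot)$ vanishes exactly on constants, quotienting $B(\mathds{X})$ by constants gives a Banach space on which $T$ (descended to the quotient) is a strict contraction. Banach's fixed point theorem then produces a unique equivalence class, any representative of which is a $v^*$ satisfying $\mathop{sp}(Tv^*-v^*)=0$, i.e. $Tv^*-v^*\equiv j^*$ for some constant $j^*$; this is exactly the ACOE \eqref{acoe}.

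For parts (iii) and (iv) the key identity is obtained by iterating the ACOE. For any admissible policy $\gamma\in\Gamma$ one has
\begin{align*}
T^n v^*(x) \leq E_x^{\mathcal{T},\gamma}\Big[\sum_{t=0}^{n-1}c(X_t,U_t)+v^*(X_n)\Big],
\end{align*}
and by the fixed point property $T^n v^*(x) = v^*(x) + n j^*$. Dividing by $n$, using boundedness of $v^*$ so that $|v^*(x)-E_x^{\mathcal{T},\gamma}[v^*(X_n)]|/n\to 0$, and taking $\limsup$ gives $j^* \leq J_\infty(\mathcal{T},\gamma,x)$, hence $j^* \leq J_\infty^*(\mathcal{T},x)$. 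For the reverse bound (and for (iv) simultaneously), if a measurable selector $\gamma^*$ achieves the infimum in the ACOE, then the inequality above becomes an equality along $\gamma^*$, and the same division-by-$n$ argument shows $J_\infty(\mathcal{T},\gamma^*,x)=j^*$; existence of such a $\gamma^*$ is delivered by Theorem \ref{stat_opt} under Assumption \ref{her_stat_assmp}. The hardest part is really part (i); once the span-norm contraction is established, everything else is a routine dynamic-programming computation combined with the uniform boundedness of $v^*$ to kill the terminal-cost term in the $1/n$ limit.
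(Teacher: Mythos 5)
Your overall route — span contraction of the Bellman operator, Banach fixed point modulo constants, then iterate the ACOE and divide by $n$ — is exactly the route the paper relies on (the paper itself does not reprove the theorem; it cites \cite{her89} and the surrounding text sketches precisely these steps), and your parts (ii)--(iv) are fine modulo the minor point that for (iii) you should use measurable $\epsilon$-optimal selectors rather than invoking Theorem \ref{stat_opt}, whose continuity/compactness hypotheses are not part of this theorem's assumptions. The genuine gap is in your part (i), which you yourself identify as the main technical step. When you bound $T^t v(x_1)-T^t w(x_1)-\big(T^t v(x_2)-T^t w(x_2)\big)$ through near-minimizers, the upper bound uses the $t$-step kernel generated by the (generally time-varying) near-minimizing policy for $w$ started at $x_1$, while the lower bound uses the near-minimizing policy for $v$ started at $x_2$. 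These are two \emph{different} policies, so the coefficient that actually appears is $\sup_{x,y,\pi_1,\pi_2}\tfrac12\|\mathcal{T}^t_{\pi_1}(\cdot|x)-\mathcal{T}^t_{\pi_2}(\cdot|y)\|_{TV}$, not your $\sup_{x,y,\gamma}\tfrac12\|\mathcal{T}^t(\cdot|x,\gamma)-\mathcal{T}^t(\cdot|y,\gamma)\|_{TV}$ with a single common stationary $\gamma$. Your triangle-inequality step through the invariant measure only works when the same $\gamma$ (hence the same $\pi_\gamma$) appears in both terms; for two different policies one would need to insert $\|\pi_{\gamma_1}-\pi_{\gamma_2}\|_{TV}$, which Assumption \ref{geo_erg} does not control and which can be arbitrarily close to $2$ (different stationary policies may have nearly mutually singular invariant measures). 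So the displayed inequality $\mathop{sp}(T^t v-T^t w)\le \alpha\,\mathop{sp}(v-w)$ does not follow from Assumption \ref{geo_erg} by the argument you give.

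The standard repair, and what the cited source actually does (\cite[Lemma 3.5]{her89}), is to work with an ergodicity coefficient that is uniform over \emph{state--action pairs} rather than over a common policy: if $\sup_{(x,u),(x',u')}\tfrac12\|\mathcal{T}(\cdot|x,u)-\mathcal{T}(\cdot|x',u')\|_{TV}\le\beta<1$ (which follows, e.g., from a policy-independent minorization as in conditions a--d of Theorem \ref{erg_conds}), then the one-step estimate you wrote with $u_w^*$ and $u_v^*$ immediately gives $\mathop{sp}(Tv-Tw)\le\beta\,\mathop{sp}(v-w)$, with no iteration and no cross-policy issue. Also, as a smaller point, even if you had the $t$-step bound, the conclusion ``$\beta:=\alpha^{1/t}$ after re-indexing'' is not valid for the one-step operator claimed in part (i); only your parenthetical alternative (apply Banach to $T^t$ and use that $T$ commutes with $T^t$, so the fixed class of $T^t$ is fixed by $T$) is correct, and it should be spelled out since part (i) as stated concerns $T$ itself.
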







We now state the main result of this section.

\begin{theorem}\label{weak_robust}
We have that \[J_\infty(\mathcal{T},\gamma_n^*,x)\to J_\infty^*(\mathcal{T},x)\]
 for any $x\in\mathds{X}$, where $\gamma_n^*$ is the optimal policy for the transition kernel $\mathcal{T}_n$ that satisfies the ACOE,
under Assumption \ref{her_stat_assmp} $A$, $B$, $C$ and $D$ if $\mathcal{T}_n(\cdot|x_n,u_n)\to\mathcal{T}(\cdot|x,u)$ weakly for any $(x_n,u_n)\to(x,u)$.
\end{theorem}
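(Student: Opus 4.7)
The strategy is to express the robustness gap $J_\infty(\mathcal{T},\gamma_n^*,x_0) - j_n^*$ as an integral of a single-step kernel-mismatch error against the $\mathcal{T}$-invariant measure $\pi^{\gamma_n^*}$ induced by $\gamma_n^*$, and then drive this error to zero by combining the convergence $j_n^* \to j^*$ from Theorem \ref{TV_main_cont} with continuous convergence of the relative value functions $v_n^* \to v^*$.

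Concretely, the ACOE applied to $\mathcal{T}_n$ supplies $j_n^* = J_\infty^*(\mathcal{T}_n)$ and $v_n^* \in B(\mathds{X})$ with
\[ j_n^* + v_n^*(x) = c(x,\gamma_n^*(x)) + \int v_n^*(y)\, \mathcal{T}_n(dy|x,\gamma_n^*(x)); \]
after normalizing $v_n^*$ at a fixed reference point, the span contraction from Assumption \ref{geo_erg} forces $\sup_n \|v_n^*\|_\infty < \infty$. Substituting this identity for $c(X_t,\gamma_n^*(X_t))$ along the trajectory generated by $\mathcal{T}$ under $\gamma_n^*$, and adding and subtracting $\int v_n^*\, d\mathcal{T}(\cdot|X_t,\gamma_n^*(X_t))$ to telescope the martingale increment, yields
\[ \frac{J_N(\mathcal{T},\gamma_n^*,x_0)}{N} = j_n^* + \frac{v_n^*(x_0) - E_{x_0}^{\mathcal{T},\gamma_n^*}[v_n^*(X_N)]}{N} + \frac{1}{N}\sum_{t=0}^{N-1} E_{x_0}^{\mathcal{T},\gamma_n^*}[e_n(X_t)], \]
where $e_n(x) := \int v_n^*(y)\bigl(\mathcal{T}(dy|x,\gamma_n^*(x)) - \mathcal{T}_n(dy|x,\gamma_n^*(x))\bigr)$. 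Uniform boundedness of $v_n^*$ kills the boundary term and Assumption \ref{geo_erg} sends the Cesàro average to $\int e_n \, d\pi^{\gamma_n^*}$, yielding the exact identity
\[ J_\infty(\mathcal{T},\gamma_n^*,x_0) = j_n^* + \int e_n(x)\, \pi^{\gamma_n^*}(dx). \]
Since Theorem \ref{TV_main_cont}(c1) delivers $j_n^* \to j^* = J_\infty^*(\mathcal{T})$, the task reduces to showing that this mismatch integral vanishes.

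For the mismatch integral, I would first establish uniform convergence $v_n^* \to v^*$, where $v^*$ is the ACOE solution for $\mathcal{T}$ under the same normalization. The span-contraction estimate $\mathop{sp}(v_n^* - v^*) \leq \tfrac{1}{1-\beta}\, \mathop{sp}(T_n v^* - T v^*)$ for the Bellman operators $T_n,T$ of $\mathcal{T}_n,\mathcal{T}$ reduces the task to $T_n v^* \to T v^*$ in span norm, which in turn follows from compactness of $\mathds{U}$, continuity of $c$ and of $v^*$ (the latter inherited from the Bellman operator preserving $C_b(\mathds{X})$ under Assumption \ref{her_stat_assmp} C and D), and the continuous weak convergence of the kernels. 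With $v_n^* \to v^*$ uniformly and $v^*$ continuous bounded, the generalized dominated convergence of \cite[Theorem 3.5]{Langen81} and \cite[Theorem 3.5]{serfozo82} gives, for every $(x_n,u_n)\to(x,u)$,
\[ \int v_n^*\, d\mathcal{T}_n(\cdot|x_n,u_n) \to \int v^*\, d\mathcal{T}(\cdot|x,u), \quad \int v_n^*\, d\mathcal{T}(\cdot|x_n,u_n) \to \int v^*\, d\mathcal{T}(\cdot|x,u), \]
so the integrand $(x,u)\mapsto \int v_n^*\,(d\mathcal{T}-d\mathcal{T}_n)$ converges continuously to zero. Uniform ergodicity forces tightness of $\{\pi^{\gamma_n^*}\}$, and compact $\mathds{U}$ makes the joint state-action laws $\pi^{\gamma_n^*}\circ(\mathrm{id},\gamma_n^*)^{-1}$ tight on $\mathds{X}\times\mathds{U}$; one further appeal to the same generalized dominated convergence lemma then yields $\int e_n\, d\pi^{\gamma_n^*} \to 0$.

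The main obstacle is upgrading the pointwise convergence $T_n v^* \to T v^*$ to span-norm convergence, which is what unlocks the uniform control on $v_n^* - v^*$: weak convergence of the kernels is on its own too mild a hypothesis, and only the combination of the span contraction, compactness of $\mathds{U}$, and continuity of $v^*$ allows this upgrade and in turn makes the generalized dominated convergence lemma applicable in the final step, where a limit is taken against the moving joint distributions $(x,\gamma_n^*(x))\sim\pi^{\gamma_n^*}$ without any a priori limit for the policy sequence itself.
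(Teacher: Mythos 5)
Your performance-difference identity $J_\infty(\mathcal{T},\gamma_n^*,x_0)=j_n^*+\int e_n\,d\pi^{\gamma_n^*}$ is correct (it follows from integrating the ACOE for $\mathcal{T}_n$ against the $\mathcal{T}$-invariant measure of $\gamma_n^*$ and using invariance to cancel $\int v_n^*\,d\pi^{\gamma_n^*}$), and it is a genuinely different decomposition from the paper's, which instead introduces the operator $\hat{T}_n$ of (\ref{hat_T}) for the true kernel under $\gamma_n^*$, its span fixed point $\hat{v}_n^*$, and runs a pointwise contradiction argument at a fixed state, never invoking invariant measures or tightness. The genuine gap is in the step you yourself single out as the main obstacle: the claim that $\mathop{sp}(T_nv^*-Tv^*)\to 0$ and hence $v_n^*\to v^*$ uniformly (in span). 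Continuous weak convergence of the kernels, compactness of $\mathds{U}$, and continuity of $c$ and $v^*$ give only continuous convergence of $T_nv^*$ to $Tv^*$, i.e.\ uniform convergence on compact subsets of $\mathds{X}$; since $\mathds{X}$ is an arbitrary (possibly unbounded) Borel subset of $\R^n$ and $v^*$ is merely continuous and bounded, $\sup_x\left|\int v^*\,d\mathcal{T}_n(\cdot|x,u)-\int v^*\,d\mathcal{T}(\cdot|x,u)\right|$ need not vanish (think of kernels of the form $\delta_{x+\phi_n(x)}$ with $\phi_n\to 0$ pointwise but not uniformly). So the span-contraction bootstrap does not deliver uniform convergence of $v_n^*$, and the Langen/Serfozo lemma cannot be fed with it as stated. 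A secondary weak point: ``uniform ergodicity forces tightness of $\{\pi^{\gamma_n^*}\}$'' is not true by itself; you need to combine Assumption \ref{geo_erg} with weak continuity of $\mathcal{T}$ and compactness of $\mathds{U}$ to get tightness of $\{\mathcal{T}^t(\cdot|x_0,\gamma)\}_{\gamma}$ for each fixed $t$ and then transfer it to the invariant measures — note the paper only assumes such tightness in Assumption \ref{conv_assmp}, and its proof of this theorem is built so as not to need it.

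The architecture is salvageable, because your error term $e_n(x)=\int v_n^*\,d\bigl(\mathcal{T}-\mathcal{T}_n\bigr)(\cdot|x,\gamma_n^*(x))$ is unchanged when a constant is added to $v_n^*$: continuous convergence up to an additive constant, $v_n^*(x_n)-v^*(x)\to c$ for all $x_n\to x$, already makes $e_n$ converge continuously to zero (via a subsequence argument on $\gamma_n^*(x_n)$ in the compact $\mathds{U}$ and Lemma \ref{langen}). That weaker statement is exactly what the paper establishes in (\ref{fixed_point_conv}), not by a one-shot perturbation bound on the fixed points, but by splitting $v_n^*-v^*$ through the finite-horizon iterates $v_n^t,v^t$ (for which continuous convergence is known from the discounted/finite-horizon results of \cite{kara2020robustness}) and using the uniform-in-$n$ span contraction to control the tails. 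If you substitute that argument for your uniform-convergence step and properly justify the tightness of $\{\pi^{\gamma_n^*}\}$ as sketched above, your route closes and yields an alternative to the paper's $\hat{v}_n^*$-based contradiction proof.
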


\begin{proof}
Consider the following two ACOEs for the kernels $\mathcal{T}_n$ and $\mathcal{T}$ with their fixed points $v_n^*$ and $v^*$:
\begin{align}\label{acoes}
j_n^*+v_n^*(x)=\inf_{u\in\mathds{U}}\bigg[c(x,u)+\int v_n^*(y)\mathcal{T}_n(dy|x,u)\bigg]\\
j^*+v^*(x)=\inf_{u\in\mathds{U}}\bigg[c(x,u)+\int v^*(y)\mathcal{T}(dy|x,u)\bigg]
\end{align}

We now show that, for all $x_n \to x$,
\begin{align}\label{fixed_point_conv}
v_n^*(x_n)-v^*(x)\to c
\end{align}
for some constant $c$ with $|c|<\infty$. To show this, we first write
\begin{align*}
&v_n^*(x_n)-v^*(x)\\
&= \big(v_n^*(x_n)- v_n^t(x_n)\big)+\big(v_n^t(x_n)-v^t(x)\big)+\big(v^t(x)-v^*(x)\big)
\end{align*}
where $v_n^t$ and $v^t$ are the results of operator (\ref{operator}) applied to the $0$-function, $t$ times for kernels the $\mathcal{T}_n$ and $\mathcal{T}$. Notice that $v_n^t$ and $v^t$ are the value functions for $t$-step cost problem and by the assumptions (\cite[Theorem 4.4]{kara2020robustness}) we have that $|v_n^t(x_n)-v^t(x)|\to 0$ for every fixed $t$. For the first and the last terms, we use the fact that the operator (\ref{operator}) is a contraction under Assumption \ref{erg_conds} for span semi-norm and hence both terms go to some constants as $t\to\infty$ uniformly for all $n$, that is $v_n^*(x_n)- v_n^t(x_n)\to c_1$ and $v^t(x)-v^*(x)\to c_2$ for some $|c_1|,|c_2|<\infty$. Thus, we have that (\ref{fixed_point_conv}) holds for some $c<\infty$.

Since $\mathds{U}$ is compact, for every $x_n\to x$, $\gamma_n(x_n)$ has a convergent subsequence which converges to say some $u^*\in\mathds{U}$. If we take the limit along this subsequence for (\ref{acoes}), using the assumptions that $\mathcal{T}_n(\cdot|x_n,u_n)\to \mathcal{T}(\cdot|x,u)$ weakly, the fact that $\lim_{n\to\infty}\big(v_n^*(x_n)-v^*(x)\big)=c$, and that $j_n^*\to j^*$ (continuity results from Theorem \ref{TV_main_cont}) we get
\begin{align*}
&\lim_{k}\bigg(j_{n_k}^*+v_{n_k}^*(x_{n_k})\bigg)\\
&=\lim_{k}c(x,\gamma_{n_k}^*(x_{n_k}))+\int v_{n_k}^*(y)\mathcal{T}_{n_k}(dy|x_{n_k},\gamma_{n_k}^*(x_{n_k}))\\
&=j^*+v^*(x)+c=c(x,u^*)+\int v^*(y)\mathcal{T}(dy|x,u^*)+c.
\end{align*}  
Therefore, $u^*$ satisfies the ACOE for the kernel $\mathcal{T}$ and thus, any convergent subsequence of $\gamma_n^*(x_n)$ is an optimal action for $x$ for the kernel $\mathcal{T}$.

Now consider the following operator $\hat{T}_n$, for the kernel $\mathcal{T}$ and the policy $\gamma_n^*$ which is optimal for $\mathcal{T}_n$
\begin{align}\label{hat_T}
\hat{T}_n\hat{v}_n(x)=c(x,\gamma_n^*(x))+\int \hat{v}_n(y)\mathcal{T}(dy|x,\gamma_n^*(x)).
\end{align}
One can show that this operator is also a contraction under span semi-norm and admits a fixed point $\hat{v}_n^*$, such that
\begin{align*}
\hat{j}_n+\hat{v}_n^*(x)=c(x,\gamma_n^*(x))+\int \hat{v}_n^*(y)\mathcal{T}(dy|x,\gamma_n^*(x))
\end{align*}
where $\hat{j}_n=J_\infty(\mathcal{T},\gamma_n^*,x)$ for all $x$. Hence, we need to show that $\hat{j}_n\to j^*$ to complete the proof. To show this, in Appendix \ref{fixed_point_span_conv_app}, we prove that 
\begin{align}\label{fixed_point_span_conv}
\lim_{n\to\infty} \hat{v}_n^*(x_n)-v^*(x)= \hat{c},
\end{align}
for any $x_n\to x$ for some constant $\hat{c}<\infty$.

Now, assume that $\lim_n \hat{j}_n\neq j^*$ and that there exists a subsequence $\hat{j}_{n_k}$ and an $\epsilon>0$ such that $|\hat{j}_{n_k}-j^*|>\epsilon$ for every $k$. We will show that this cannot hold, by establishing the existence of a further subsequence $\hat{j}_{n_{k_l}}$ which converges to $j^*$ in the following.

We first note that $\lim_{n\to\infty} \hat{v}_n^*(x_n)-v^*(x)= \hat{c}$.  Hence, \cite[Theorem 3.5]{Langen81} (or \cite[Theorem 3.5]{serfozo82}) yields that $\int \hat{v}_{n_{k_l}}^*(y)\mathcal{T}(dy|x,\gamma_{n_{k_l}}^*(x))\to \int v^*(y)\mathcal{T}(dy|x,u^*)+\hat{c}$ where $\hat{c}$ also satisfies $\big(\hat{v}_{n_{k_l}}^*(x)-v^*(x)\big)\to \hat{c}$.

Therefore, taking the limit along this subsequence,
\begin{align*}
&\lim_{l\to\infty}\hat{j}_{n_{k_l}} \\
&=\lim_{l\to\infty}c(x,\gamma_{n_{k_l}}^*(x))+\int \hat{v}_{n_{k_l}}^*(y)\mathcal{T}(dy|x,\gamma_{n_{k_l}}^*(x))-\hat{v}_{n_{k_l}}^*(x)\\
&=c(x,u^*)+\int v^*(y)\mathcal{T}(dy|x,u^*)-v^*(x)=j^*.
\end{align*}
This contradicts to $|\hat{j}_{n_k}-j^*|>\epsilon$, hence we conclude that $\hat{j}_n \to j^*$.
\end{proof}

\subsection{Robustness under setwise convergence of transition kernels}

\begin{theorem}\label{setwise_robust}
We have that $J_\infty(\mathcal{T},\gamma_n^*,x)\to J_\infty^*(\mathcal{T},x)$ for any $x\in\mathds{X}$, where $\gamma_n^*$ is the optimal policy for the transition kernel $\mathcal{T}_n$ that satisfies the ACOE,
under Assumption \ref{her_stat_assmp} $A$, $B$, $C$ and $D$ if $\mathcal{T}_n(\cdot|x,u_n)\to\mathcal{T}(\cdot|x,u)$ weakly for any $u_n\to u$.
\end{theorem}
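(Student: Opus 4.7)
The plan is to adapt the proof of Theorem \ref{weak_robust} essentially step by step, exploiting the simplification that here the state $x$ is held fixed and only the action argument varies. I would begin by writing the two ACOEs for $\mathcal{T}_n$ and $\mathcal{T}$ with fixed points $v_n^*, v^*$ and average costs $j_n^*, j^*$ respectively, noting that $j_n^* \to j^*$ is already guaranteed by Theorem \ref{TV_main_cont}. The whole argument then factors through two facts: (i) at the fixed $x$, the fixed points $v_n^*$ approach $v^*$ up to an additive constant; and (ii) any accumulation point $u^*$ of $\gamma_n^*(x)$ is an optimal action at $x$ for $\mathcal{T}$.

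To establish (i), I would use the decomposition
\[
v_n^*(x) - v^*(x) = \bigl(v_n^*(x) - v_n^t(x)\bigr) + \bigl(v_n^t(x) - v^t(x)\bigr) + \bigl(v^t(x) - v^*(x)\bigr)
\]
from the weak case, where $v_n^t, v^t$ are the $t$-th dynamic-programming iterates starting from $0$; the outer terms converge to constants uniformly in $n$ as $t \to \infty$ by the span-norm contraction under Assumption \ref{geo_erg}, while the middle term vanishes for each fixed $t$ as $n \to \infty$ by the finite-horizon continuity results invoked inside Theorem \ref{TV_main_cont}. For (ii), compactness of $\mathds{U}$ gives a subsequence with $\gamma_{n_k}^*(x) \to u^*$, and passing to the limit in the $\mathcal{T}_n$-ACOE along this subsequence, using the hypothesis $\mathcal{T}_{n_k}(\cdot|x,\gamma_{n_k}^*(x)) \to \mathcal{T}(\cdot|x,u^*)$ together with (i) via a generalized dominated convergence in the spirit of \cite[Theorem 3.5]{Langen81} or \cite[Theorem 3.5]{serfozo82}, yields that $u^*$ satisfies the ACOE for $\mathcal{T}$.

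With (i) and (ii) in hand, I would introduce the auxiliary operator
\[
\hat{T}_n \hat{v}_n(x) = c(x,\gamma_n^*(x)) + \int \hat{v}_n(y)\, \mathcal{T}(dy|x,\gamma_n^*(x)),
\]
and its span-contraction fixed point $\hat{v}_n^*$ with constant $\hat{j}_n = J_\infty(\mathcal{T},\gamma_n^*,x)$. The same three-term decomposition used for (i) gives $\hat{v}_n^*(x) - v^*(x) \to \hat{c}$ for some finite $\hat{c}$ (this is exactly what Appendix \ref{fixed_point_span_conv_app} handles in the weak case). I would then close by contradiction: if some subsequence satisfies $|\hat{j}_{n_k} - j^*| > \epsilon$, refine it further so that $\gamma_{n_{k_l}}^*(x) \to u^*$, and apply the same generalized dominated convergence to pass the limit in the $\hat{T}_n$-ACOE, reaching $\lim_l \hat{j}_{n_{k_l}} = j^*$, a contradiction.

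The main obstacle I anticipate is the limit exchange inside the integrals $\int v_n^*(y)\,\mathcal{T}_n(dy|x,\gamma_n^*(x))$ and $\int \hat{v}_n^*(y)\,\mathcal{T}(dy|x,\gamma_n^*(x))$. The integrands are only bounded and measurable \emph{a priori}, and the measures are driven to their limits only through the action argument; arranging the bounded pointwise convergence of the fixed points so that it pairs correctly with the stated convergence of $\mathcal{T}_n(\cdot|x,u_n)$ is the technical heart of the argument and is where the present setting departs most visibly from the finite-horizon and discounted-cost analogs. The remaining verifications — span contraction of $\hat{T}_n$, boundedness and measurability of the fixed points, existence of optimal stationary selectors — are either routine or already covered by the earlier ACOE development.
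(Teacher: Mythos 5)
Your overall architecture matches the paper's proof exactly: the two ACOEs, the three-term decomposition through the $t$-step iterates to get $v_n^*(x)-v^*(x)\to c$, the extraction of a convergent subsequence $\gamma_{n_k}^*(x)\to u^*$ and passage to the limit in the ACOE to show $u^*$ is optimal, the auxiliary operator $\hat{T}_n$ with fixed point $\hat{v}_n^*$ and constant $\hat{j}_n=J_\infty(\mathcal{T},\gamma_n^*,x)$, and the closing contradiction argument. The one place where you depart from the paper is precisely the step you flag as the ``technical heart'' and leave unresolved, and there the tool you propose does not apply. You suggest handling the limit exchange in $\int v_n^*(y)\,\mathcal{T}_n(dy|x,\gamma_n^*(x))$ (and likewise for $\hat{v}_n^*$ against $\mathcal{T}(dy|x,\gamma_n^*(x))$) via the generalized dominated convergence theorem of Langen/Serfozo, but that result pairs \emph{weak} convergence of the measures with \emph{continuous} convergence of the integrands, i.e.\ $f_n(y_n)\to f(y)$ for all $y_n\to y$. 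In Theorem \ref{weak_robust} that continuous convergence of $v_n^*$ is available because the hypothesis there is $\mathcal{T}_n(\cdot|x_n,u_n)\to\mathcal{T}(\cdot|x,u)$ for \emph{jointly} converging $(x_n,u_n)$, which yields $|v_n^t(x_n)-v^t(x)|\to 0$. Here the kernels converge only in the action argument at each fixed state, so the finite-horizon result one can invoke (\cite[Theorem 4.8]{kara2020robustness}) gives only pointwise convergence $v_n^*(x)-v^*(x)\to c$ for each fixed $x$, not continuous convergence in the state. Pointwise convergence of the integrands against weakly converging measures is not enough for the exchange (take $\mu_n=\delta_{1/n}\to\delta_0$ and a fixed integrand discontinuous at $0$), so the Langen/Serfozo route breaks down at exactly the step on which everything else rests.

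The paper closes this gap differently: its proof works with \emph{setwise} convergence of $\mathcal{T}_n(\cdot|x,u_n)\to\mathcal{T}(\cdot|x,u)$ (consistent with the section title, the displayed hypothesis ``weakly'' notwithstanding) and replaces Langen/Serfozo by \cite[Theorem 20]{royden}, which allows bounded, pointwise-converging integrands against setwise-converging measures. With that substitution, the Appendix \ref{fixed_point_span_conv_app} argument for $\hat{v}_n^*(x)-v^*(x)\to\hat{c}$ and both limit passages (optimality of $u^*$ and $\hat{j}_{n_{k_l}}\to j^*$) go through verbatim. So the missing ingredient in your proposal is this correct pairing of convergence modes: you need either the joint continuous-weak-convergence hypothesis of the weak case (which is not assumed here) or the setwise-convergence-with-varying-functions lemma; as written, your argument has a genuine hole at the integral limit exchange.
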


\begin{proof}
The proof follows the similar steps as in the proof of Theorem \ref{weak_robust}. Consider again the following two ACOE for the kernels $\mathcal{T}_n$ and $\mathcal{T}$ with their fixed point $v_n^*$ and $v^*$
\begin{align}\label{acoes_s}
j_n^*+v_n^*(x)=\inf_{u\in\mathds{U}}\bigg[c(x,u)+\int v_n^*(y)\mathcal{T}_n(dy|x,u)\bigg]\\
j^*+v^*(x)=\inf_{u\in\mathds{U}}\bigg[c(x,u)+\int v^*(y)\mathcal{T}(dy|x,u)\bigg].
\end{align}
With the same argument used to establish (\ref{fixed_point_conv}) now using \cite[Theorem 4.8]{kara2020robustness}  one show that $v_n^*(x)-v^*(x)\to c$, for some constant $|c|<\infty$ for all $x$. 

Since $\mathds{U}$ is compact, for every $x$, $\gamma_n(x)$ has a convergent subsequence which converges to say some $u^*\in\mathds{U}$. If we take the limit along this subsequence for (\ref{acoes_s}), using the assumptions that $\mathcal{T}_n(\cdot|x,u_n)\to \mathcal{T}(\cdot|x,u)$ setwise, the fact that $\lim_{n\to\infty}\big(v_n^*(x)-v^*(x)\big)=c$, and that $j_n^*\to j^*$ (continuity results from Theorem \ref{TV_main_cont}) we get
\begin{align*}
&\lim_{k}j_{n_k}^*+v_{n_k}^*(x)\\
&=c(x,\gamma_{n_k}^*(x))+\int v_{n_k}^*(y)\mathcal{T}_{n_k}(dy|x,\gamma_{n_k}^*(x))\\
&=j^*+v^*(x)+c=c(x,u^*)+\int v^*(y)\mathcal{T}(dy|x,u^*)+c.
\end{align*}  
Therefore, $u^*$ satisfies the ACOE for the kernel $\mathcal{T}$ and thus, any convergent subsequence of $\gamma_n^*(x)$ is an optimal action for $x$ for the kernel $\mathcal{T}$.

Now consider the operator $\hat{T}_n$ again,
\begin{align}\label{hat_T_s}
\hat{T}_n\hat{v}_n(x)=c(x,\gamma_n^*(x))+\int \hat{v}_n(y)\mathcal{T}(dy|x,\gamma_n^*(x)).
\end{align}
We write
\begin{align*}
\hat{j}_n+\hat{v}_n^*(x)=c(x,\gamma_n^*(x))+\int \hat{v}_n^*(y)\mathcal{T}(dy|x,\gamma_n^*(x))
\end{align*}
where $\hat{j}_n=J_\infty(\mathcal{T},\gamma_n^*,x)$ for all $x$. Hence, we need to show that $\hat{j}_n\to j^*$ to complete the proof. By replicating the same arguments in Appendix \ref{fixed_point_span_conv_app} for setwise convergence and using \cite[Theorem 20]{royden} (setwise convergence with varying functions), one can prove that 
\begin{align*}
\lim_{n\to\infty} \hat{v}_n^*(x)-v^*(x)= \hat{c},
\end{align*}
for any $x$ for some constant $\hat{c}<\infty$.

Now, assume that $\lim_n \hat{j}_n\neq j^*$ and that there exists a subsequence $\hat{j}_{n_k}$ and an $\epsilon>0$ such that $|\hat{j}_{n_k}-j^*|>\epsilon$ for every $k$. We first note that $\lim_{n\to\infty} \hat{v}_n^*(x)-v^*(x)= \hat{c}$.  Hence, \cite[Theorem 20]{royden} yields that $\int \hat{v}_{n_{k_l}}^*(y)\mathcal{T}(dy|x,\gamma_{n_{k_l}}^*(x))\to \int v^*(y)\mathcal{T}(dy|x,u^*)+\hat{c}$ where $\hat{c}$ also satisfies $\big(\hat{v}_{n_{k_l}}^*(x)-v^*(x)\big)\to \hat{c}$.

Therefore, taking the limit along this subsequence,
\begin{align*}
& \lim_{l\to\infty}\hat{j}_{n_{k_l}} \\
&=\lim_{l\to\infty}c(x,\gamma_{n_{k_l}}^*(x))+\int \hat{v}_{n_{k_l}}^*(y)\mathcal{T}(dy|x,\gamma_{n_{k_l}}^*(x))-\hat{v}_{n_{k_l}}^*(x)\\
&=c(x,u^*)+\int v^*(y)\mathcal{T}(dy|x,u^*)-v^*(x)=j^*.
\end{align*}
This contradicts to $|\hat{j}_{n_k}-j^*|>\epsilon$, hence we conclude that $\hat{j}_n \to j^*$.
\end{proof}

\subsection{Robustness under total variation convergence of transition kernels}

In this section, we will show that for any stationary policy $\gamma_n^*$ that is optimal for $\mathcal{T}_n$, as $\mathcal{T}_n\to\mathcal{T}$ in total variation, we have $J_\infty(\mathcal{T},\gamma_n^*)\to J_\infty^*(\mathcal{T})$ under proper conditions.

We note that, since setwise convergence is less stringent than total variation, it is not surprising that we can establish robustness under total variation convergence of the kernels as well. However, in the previous section, we showed convergence only when the policies were restricted to be among those which solves the ACOE for every point in the state space; in the analysis below, the result will be more general and the policies considered are just required to be optimal, without the requirement that they solve the ACOE for every $x \in \mathds{X}$. This is not vacuous, as the following example shows.

\begin{exmp}
Consider $\mathds{X}=[-1,1]$ and $\mathds{U}=\{0,1,2\}$.

Let the kernels be given in the following form for $n\geq1$:
\begin{align*}
\mathcal{T}_n(\cdot|x,u)&=\bigg(\frac{1}{2}\delta_{\frac{1}{n}}(\cdot)+\frac{1}{2}\delta_{-\frac{1}{n}}(\cdot)\bigg)\mathds{1}_{\{x\geq\frac{1}{n}\}}\\
&+\bigg(\frac{1}{2}\delta_{\frac{1}{n}}(\cdot)+\frac{1}{2}\delta_{-\frac{1}{n}}(\cdot)\bigg)\mathds{1}_{\{x\leq-\frac{1}{n}\}}\\
& +\bigg(\frac{1}{3}\delta_{\frac{1}{n}}(\cdot)+\frac{1}{3}\delta_{-\frac{1}{n}}(\cdot)+\frac{1}{3}\delta_{0}(\cdot)\bigg)\mathds{1}_{\{-\frac{1}{n}< x<\frac{1}{n}\}}\\
\mathcal{T}(\cdot|x,u)=&\delta_{0}(\cdot).
\end{align*}
The cost function is given by:
\begin{align*}
c(x,u)=
\begin{cases}
(x)\mathds{1}_{x\geq0} + 0\mathds{1}_{x<0}  &\text{if } u=0,1\\
3 &\text{if } u=2.
\end{cases}
\end{align*}
One can show that this setup satisfies Assumption \ref{her_stat_assmp} $A$, $B$, $C$ and $D$. With this setup, one (among many others) optimal policy for $\mathcal{T}_n$ when the initial state is $x=-1$ is given by;
\begin{align*}
\gamma_n^*(x)=
\begin{cases}
1  &\text{if } x\leq -\frac{1}{n}\\
0  &\text{if } x\geq \frac{1}{n}\\
2  &\text{otherwise}.
\end{cases}
\end{align*}
When the initial state is $-1$, the cost under this policy is 
\begin{align*}
J_\infty(\mathcal{T}_n,\gamma_n^*)=\lim_{N\to \infty}\frac{1}{N}N\frac{1}{2}\frac{1}{n}=\frac{1}{2n}\to0.
\end{align*}
Therefore the policy $\gamma_n^*$ is indeed optimal for $\mathcal{T}_n$ for large $n$. An  optimal policy for $\mathcal{T}$ is given by $\gamma^*(x)=1$. Thus, the average cost values can be calculated as:
\begin{align*}
& J_\infty(\mathcal{T},\gamma_n^*)=\lim_{N\to\infty}\frac{1}{N}\sum_{t=0}^{N-1}E[c(X_t,\gamma_n^*(X_t))]\\
&=\lim_{N\to\infty}\frac{1}{N}\sum_{t=1}^{N-1}c(0,\gamma_n^*(0))=\lim_{N\to\infty}\frac{1}{N} \sum_{t=1}^{N-1}3=3
\end{align*}
Hence, we have that $$\lim_{n\to\infty}J_\infty(\mathcal{T},\gamma_n^*)=3\neq J_\infty(\mathcal{T},\gamma^*)=0.$$
Notice that for this example, $\gamma_n^*$ would not have been be optimal for $\mathcal{T}_n$ if the initial state were between $-1/n$ and $1/n$ and, in particular, it does not satisfy the ACOE.  
\hfill $\diamond$
\end{exmp}

\begin{theorem}\label{TV_robust}
We have that $|J_\infty(\mathcal{T},\gamma_n^*)-J_\infty^*(\mathcal{T})|\to 0$ for any stationary optimal policy $\gamma_n^*$ for $\mathcal{T}_n$, under Assumption \ref{her_stat_assmp} $A$, $B$, $C'$ and $D'$ if $\mathcal{T}_n(\cdot|x,u_n)\to\mathcal{T}(\cdot|x,u)$ in total variation for any $u_n\to u$ for every fixed $x$.
\end{theorem}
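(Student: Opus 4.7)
The plan is to split the error via the triangle inequality
\begin{align*}
|J_\infty(\mathcal{T},\gamma_n^*)-J_\infty^*(\mathcal{T})| \leq |J_\infty(\mathcal{T},\gamma_n^*)-J_\infty(\mathcal{T}_n,\gamma_n^*)| + |J_\infty^*(\mathcal{T}_n)-J_\infty^*(\mathcal{T})|,
\end{align*}
using the fact that $\gamma_n^*$ is $\mathcal{T}_n$-optimal so that $J_\infty(\mathcal{T}_n,\gamma_n^*)=J_\infty^*(\mathcal{T}_n)$. The second term vanishes by Theorem \ref{TV_main_cont}(c2): total-variation convergence in $u$ for every fixed $x$ immediately implies setwise convergence in $u$, and Assumption \ref{her_stat_assmp} $A$, $B$, $C'$, $D'$ is precisely the hypothesis required by (c2).

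For the first term I would invoke Lemma \ref{apprx} to trade the infinite-horizon difference for a finite-horizon one: the estimate there is uniform in $\gamma\in\Gamma$ and in $n$ by Assumption \ref{geo_erg}, so for any $\epsilon>0$ there is a single $t_\epsilon$ with
\begin{align*}
|J_\infty(\mathcal{T},\gamma_n^*)-J_\infty(\mathcal{T}_n,\gamma_n^*)| \leq 2\epsilon + \frac{1}{t_\epsilon}\bigl|J_{t_\epsilon}(\mathcal{T},\gamma_n^*)-J_{t_\epsilon}(\mathcal{T}_n,\gamma_n^*)\bigr|,
\end{align*}
reducing the task to showing $|J_t(\mathcal{T},\gamma_n^*)-J_t(\mathcal{T}_n,\gamma_n^*)|\to 0$ for each fixed $t$. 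The first ingredient is that continuous TV convergence of $u\mapsto\mathcal{T}_n(\cdot|x,u)$ on the compact space $\mathds{U}$ (Assumption $B$) upgrades to uniform convergence in $u$ for each fixed $x$, by the classical fact that continuous convergence of metric-valued functions on a compact domain is uniform. In particular $\|\mathcal{T}_n(\cdot|x,\gamma_n^*(x))-\mathcal{T}(\cdot|x,\gamma_n^*(x))\|_{TV}\to 0$ pointwise in $x$. With $c(x,\gamma_n^*(x))$ uniformly bounded, the finite-horizon gap telescopes, and its control reduces to the inductive-in-$i$ bound
\begin{align*}
\|\mathcal{T}_n^{i+1}(\cdot|x_0,\gamma_n^*)-\mathcal{T}^{i+1}(\cdot|x_0,\gamma_n^*)\|_{TV} &\leq \|\mathcal{T}_n^{i}-\mathcal{T}^{i}\|_{TV} \\
&\quad + \int\|\mathcal{T}_n(\cdot|x,\gamma_n^*(x))-\mathcal{T}(\cdot|x,\gamma_n^*(x))\|_{TV}\,\mathcal{T}_n^{i}(dx|x_0,\gamma_n^*).
\end{align*}

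The main obstacle is the integral on the right-hand side: its integrand vanishes pointwise in $x$ and is bounded by $2$, but the integrating probability measure $\mathcal{T}_n^{i}(\cdot|x_0,\gamma_n^*)$ itself depends on $n$, so a naive application of the dominated convergence theorem does not apply. I would resolve this by a subsequence-extraction argument combined with compactness of $\mathds{U}$ and the uniform mixing in Assumption \ref{geo_erg}: if the integral failed to vanish along some subsequence, tightness of the iterated measures (a consequence of uniform ergodicity and the fact that, after finitely many steps, $\mathcal{T}_n^{i}(\cdot|x_0,\gamma_n^*)$ sits near the invariant $\pi_n$) would allow passage to a subsequential limit $\mu$, against which the pointwise decay of the integrand then yields $\int g_n\,d\mu\to 0$ by ordinary dominated convergence; combined with the inductive TV closeness $\|\mathcal{T}_n^{i}-\mathcal{T}^{i}\|_{TV}\to 0$ to absorb the difference $\int g_n\,d\mathcal{T}_n^{i}-\int g_n\,d\mu$, this produces a contradiction. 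Propagating the resulting TV estimate through $i=0,\dots,t-1$ gives $|J_t(\mathcal{T},\gamma_n^*)-J_t(\mathcal{T}_n,\gamma_n^*)|/t\to 0$ for each fixed $t$, and letting $\epsilon\to 0$ completes the proof.
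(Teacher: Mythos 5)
Your skeleton coincides with the paper's: the same triangle inequality (using $J_\infty(\mathcal{T}_n,\gamma_n^*)=J_\infty^*(\mathcal{T}_n)$), Theorem \ref{TV_main_cont} for the optimal-cost term, and Lemma \ref{apprx} (uniform in $n$ and $\gamma$ via Assumption \ref{geo_erg}, which follows from Assumption \ref{her_stat_assmp}(A)) to reduce to the fixed-horizon quantity $|J_t(\mathcal{T}_n,\gamma_n^*)-J_t(\mathcal{T},\gamma_n^*)|$. The paper disposes of that last term by citing \cite[Section A.2]{kara2020robustness}, where $\sup_{\gamma\in\Gamma}|J_t(\mathcal{T}_n,\gamma)-J_t(\mathcal{T},\gamma)|\to 0$ is proved under exactly these hypotheses; you attempt to prove it directly, and your preliminary steps are fine: continuous TV convergence in $u$ on compact $\mathds{U}$ does upgrade to $\sup_u\|\mathcal{T}_n(\cdot|x,u)-\mathcal{T}(\cdot|x,u)\|_{TV}\to 0$ for each fixed $x$, and your inductive TV recursion for the iterated kernels is correct.

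The gap is in your resolution of the ``main obstacle,'' the term $\int g_n(x)\,\mathcal{T}_n^{i}(dx|x_0,\gamma_n^*)$ with $g_n(x)=\|\mathcal{T}_n(\cdot|x,\gamma_n^*(x))-\mathcal{T}(\cdot|x,\gamma_n^*(x))\|_{TV}$. First, tightness of $\{\mathcal{T}_n^{i}(\cdot|x_0,\gamma_n^*)\}_n$ is not implied by the theorem's hypotheses: $i$ is a fixed finite horizon, Assumption \ref{geo_erg} says nothing about tightness of the invariant measures $\{\pi^n_\gamma\}$ (that is a separate hypothesis, Assumption \ref{conv_assmp}(ii), used only in Section \ref{SectionEqui}), and $\mathds{X}$ need not be compact. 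Second, and more fundamentally, even granted a weak subsequential limit $\mu$, the step ``absorb $\int g_n\,d\mathcal{T}_n^{i}-\int g_n\,d\mu$ via the inductive TV closeness'' does not go through: the inductive bound controls $\|\mathcal{T}_n^{i}(\cdot|x_0,\gamma_n^*)-\mathcal{T}^{i}(\cdot|x_0,\gamma_n^*)\|_{TV}$, where \emph{both} measures depend on $n$ through $\gamma_n^*$; neither is TV-close to the fixed $\mu$, and weak convergence to $\mu$ cannot be used to integrate the merely measurable, non-continuously-converging $g_n$. The working argument (the one underlying the cited \cite[Section A.2]{kara2020robustness}) instead bounds $g_n(x)\le h_n(x):=\sup_u\|\mathcal{T}_n(\cdot|x,u)-\mathcal{T}(\cdot|x,u)\|_{TV}$ to remove the policy from the integrand, uses the inductive TV estimate only to replace $\mathcal{T}_n^{i}(\cdot|x_0,\gamma_n^*)$ by $\mathcal{T}^{i}(\cdot|x_0,\gamma_n^*)$ at cost $2\|\mathcal{T}_n^{i}-\mathcal{T}^{i}\|_{TV}$, and then shows $\sup_u\int h_n(y)\,\mathcal{T}(dy|x,u)\to 0$ by a subsequence argument in $u$: compactness of $\mathds{U}$ (B), setwise continuity of $\mathcal{T}(\cdot|x,u)$ in $u$ (D'), and the generalized dominated convergence theorem for setwise convergent measures \cite[Theorem 20]{royden}, iterated over the horizon. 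Your proposal never invokes Assumption \ref{her_stat_assmp}(D'), which is the ingredient that makes this step work; as written, the finite-horizon convergence is not established.
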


\begin{proof}
We write:
\begin{align*}
&|J_\infty(\mathcal{T},\gamma_n^*)-J_\infty^*(\mathcal{T})|\\
&\qquad\leq |J_\infty^*(\mathcal{T}_n)-J_\infty^*(\mathcal{T})|+ |J_\infty(\mathcal{T},\gamma_n^*)-J_\infty^*(\mathcal{T}_n)|
\end{align*}
the first term goes to by Theorem \ref{TV_main_cont}. For the second term we write
\begin{align*}
&|J_\infty(\mathcal{T},\gamma_n^*)-J_\infty^*(\mathcal{T}_n)|\\
&\leq\left|J_\infty(\mathcal{T},\gamma_n^*)-\frac{J_t(\mathcal{T}_n,\gamma_n^*)}{t}\right| \\
& +\left|\frac{J_t(\mathcal{T}_n,\gamma_n^*)}{t}-\frac{J_t(\mathcal{T},\gamma_n^*)}{t}\right|+\left|\frac{J_t(\mathcal{T},\gamma_n^*)}{t}-J_\infty(\mathcal{T},\gamma_n^*)\right|
\end{align*}
The first and the last terms above again can be made arbitrarily small by choosing $t$ large enough uniformly over $n$ using Lemma \ref{apprx}. For the second term we use \cite[Section A.2]{kara2020robustness} where it is shown that under the stated assumptions $\sup_{\gamma\in\Gamma}|J_t(\mathcal{T}_n,\gamma)-J_t(\mathcal{T},\gamma)|\to 0$. Hence the proof is complete.
\end{proof}

\subsection{Comparison with the literature}
In the most relevant contribution (to our knowledge), \cite{her89}, the following problem is considered: Suppose we have an approximating model for the true kernel $\mathcal{T}$, so that, every time step $t$, an estimate is updated to $\mathcal{T}_t$ and an optimal stationary policy is found with
\begin{align*}
&J_\infty^*(\mathcal{T}_t)=\inf_{\gamma\in \Gamma}\limsup_{N\to\infty}\frac{1}{N}\sum_{i=0}^{N-1}E^{\mathcal{T}_t}\big[c(X_i,\gamma(X_i))\big],
\end{align*}
where $E^{{\mathcal T}_t}$ denotes the expectation with transition kernel $\mathcal{T}_t$. Let $\gamma_t^*$ be the optimal policy for the kernel $\mathcal{T}_t$. It is shown that (\cite[Theorem 5.7]{her89}) if condition f of Theorem \ref{erg_conds} holds for $\mathcal{T}$, $\mathcal{T}_n$ and if $\sup_{x,u}\|\mathcal{T}_t(\cdot|x,u)-\mathcal{T}(\cdot|x,u)\|_{TV}\to 0,$ the following hold:
\begin{itemize}
\item[(i)] $J_\infty(\mathcal{T}_t,\gamma_t^*)\to J_\infty(\mathcal{T},\gamma^*)$ as $t\to\infty$,
\item[(ii)] $J_\infty(\mathcal{T},\gamma_t^*)= J_\infty(\mathcal{T},\gamma^*)$.
\end{itemize} 
Notice that the first item (i) is the continuity problem we study in this paper (Problem 1). Therefore, it can also be shown to hold true with Theorem \ref{TV_main_cont}, which can be proven with  $$\sup_{u}\|\mathcal{T}_t(\cdot|x,u)-\mathcal{T}(\cdot|x,u)\|_{TV}\to 0$$ for fixed $x$. Theorem \ref{TV_main_cont} also states that we can even weaken the total variation convergence of kernels to weak convergence so that it suffices to have  $\mathcal{T}_t(\cdot|x_t,u_t)\to\mathcal{T}(\cdot|x,u)$ weakly for any $(x_t,u_t)\to(x,u)$. For the second item (ii), policies $\gamma_t^*$ (since they are not time-invariant) are not stationary for the model $\mathcal{T}$, however it is shown in \cite{her89} that using $\gamma_t^*$ at time step $t$, the cost 
$$\limsup_{N\to\infty}\frac{1}{N}\sum_{t=0}^{N-1}E^{\mathcal{T}}\big[c(X_t,\gamma_t^*(X_t))\big]$$ is equal to the optimal cost for the true model. 


The following example shows that the total variation convergence of the transition kernels can be too much to ask for deterministic problems. Thus, the relaxation to weak convergence of transition kernels is significant.
\begin{exmp}
Assume that $x_0=0$ and the transition kernels are given by
\begin{align*}
&\mathcal{T}(\cdot|x)= \delta_{0}(\cdot),\quad \mathcal{T}_n(\cdot|x)= \delta_{\frac{1}{n}}(\cdot).
\end{align*}
The cost function is given as
\begin{align*}
c(x)=&\begin{cases} |x| \quad \text{ if } |x|\leq1,\\
1  \quad \text{ if } |x|>1.\end{cases}
\end{align*}
Notice that $\|\mathcal{T}_n(\cdot|x)-\mathcal{T}(\cdot|x)\|_{TV}=2$ for all $n$, however, $\mathcal{T}_n(\cdot|x_n)\to\mathcal{T}(\cdot|x)$ weakly for any $x_n\to x$. Furthermore, $J_\infty^*(\mathcal{T}_n)\to J_\infty^*(\mathcal{T})$.
Hence, although the transition kernels do not converge to each other in total variation, continuity still holds.
\hfill $\diamond$
\end{exmp}

\section{Robustness under Convergence of Transition Kernels without Uniform Ergodicity}\label{SectionEqui}

In this section we show that if the family of optimal policies forms an equicontinuous and stationary family we can guarantee continuity and robustness without requiring a uniform ergodicity over policies and initial points. We first present a supporting lemma from \cite[Theorem 3.5]{Langen81} and \cite[Theorem 3.5]{serfozo82}. 
\begin{lemma}\label{langen}
Suppose $\{\mu_n\}_n \subset \mathcal{P}(\mathds{X})$, where $\mathds{X}$ is a Polish space, converges weakly to some $\mu \in \mathcal{P}(\mathds{X})$. For a bounded real valued sequence of functions $\{f_n\}_n$ such that $\|f_n\|_\infty <C$ for all $n>0$ with $C<\infty$, if $\lim_{n \to \infty}f_n(x_n)=f(x)$ for all $x_n \to x$, i.e. $f_n$ continuously converges to $f$, then
$\lim_{n \to \infty}\int_{\mathds{X}}f_n(x)\mu_n( dx)=\int_{\mathds{X}}f(x)\mu( dx)$.
\end{lemma}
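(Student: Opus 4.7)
The plan is to reduce the statement to the classical dominated convergence theorem by transferring the weak convergence of $\mu_n$ to almost sure convergence of random variables on a common probability space, which is available because $\mathds{X}$ is Polish. Since continuous convergence is defined in terms of convergent sequences in $\mathds{X}$, Skorokhod's representation is the natural tool.

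First I would verify that the limit function $f$ is itself continuous (hence Borel measurable, so that $\int f\,d\mu$ is well defined). Given $y_k\to y$, for each $k$ I can pick $n_k$ (with $n_k\to\infty$) so large that $|f_{n_k}(y_k)-f(y_k)|<1/k$; applying continuous convergence to the sequence $y_k\to y$ along the subsequence $n_k$ yields $f_{n_k}(y_k)\to f(y)$, and combining the two estimates gives $f(y_k)\to f(y)$. This is standard and takes only a few lines.

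Next I would invoke Skorokhod's representation theorem: since $\mathds{X}$ is Polish and $\mu_n\to\mu$ weakly, there exist a probability space $(\Omega,\mathcal{F},P)$ and $\mathds{X}$-valued random variables $X_n,X$ with $X_n\sim\mu_n$, $X\sim\mu$, and $X_n(\omega)\to X(\omega)$ for $P$-almost every $\omega$. Applying the continuous convergence hypothesis pointwise in $\omega$ then gives $f_n(X_n(\omega))\to f(X(\omega))$ almost surely. Since $|f_n(X_n)|\le C$ uniformly in $n$, the classical dominated convergence theorem yields $E[f_n(X_n)]\to E[f(X)]$, which is precisely $\int_{\mathds{X}} f_n\,d\mu_n\to\int_{\mathds{X}} f\,d\mu$.

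The main obstacle is essentially a bookkeeping one: one must ensure that the measurability assumptions are in force (the $f_n$ must be Borel measurable for the integrals to make sense, and the first step supplies measurability of $f$), and that Skorokhod's theorem is invoked in its correct Polish-space formulation. Once these details are in place, the proof is short; the alternative route of working directly with tightness of $\{\mu_n\}$ and upgrading continuous convergence to uniform convergence on compact sets would be substantially more technical and is avoided by the Skorokhod reduction.
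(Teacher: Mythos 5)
Your proof is correct, and it is genuinely different in character from what the paper does: the paper does not prove Lemma \ref{langen} at all, it simply imports it as \cite[Theorem 3.5]{Langen81}, \cite[Theorem 3.5]{serfozo82}, where the original arguments are measure-theoretic (Portmanteau/semicontinuous-envelope type estimates on $\limsup_n \int f_n\,d\mu_n$ and $\liminf_n \int f_n\,d\mu_n$, valid under one-sided bounds and without completeness of the space). Your Skorokhod reduction is a clean, self-contained alternative that exploits exactly the hypotheses in the lemma as stated: Polishness of $\mathds{X}$ licenses the representation $X_n\sim\mu_n$, $X\sim\mu$ with $X_n\to X$ a.s., continuous convergence then gives $f_n(X_n)\to f(X)$ a.s., and the uniform bound $C$ lets the ordinary bounded convergence theorem finish. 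Two small points deserve explicit mention if this were written out: (i) Borel measurability of each $f_n$ must be assumed (it is implicit in the lemma, since otherwise $\int f_n\,d\mu_n$ is undefined), and your first step correctly supplies measurability of the limit $f$ via its continuity; (ii) in that first step, the passage from the full-sequence hypothesis ``$f_n(x_n)\to f(x)$ for all $x_n\to x$'' to the subsequence statement $f_{n_k}(y_k)\to f(y)$ should be justified by the usual block construction (define $x_n=y_k$ for $n_k\le n<n_{k+1}$, so that $x_n\to y$ and $f_{n_k}(y_k)$ is a subsequence of $f_n(x_n)$) --- a bookkeeping detail, not a gap. The trade-off between the two routes: the cited Langen/Serfozo theorems are more general (they allow one-sided semicontinuous convergence and unbounded functions under uniform integrability, and do not need completeness), whereas your argument is shorter and entirely elementary in the Polish, uniformly bounded setting that the paper actually uses.
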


The following result shows that if we restrict the family of policies to an equicontinuous and stationary family we can guarantee continuity and robustness. For the result we do not require a uniform ergodicity assumption as in Assumption \ref{geo_erg}.  

\begin{assumption}\label{conv_assmp}
\begin{itemize}
\item[(i)] For any stationary policy $\gamma$, $\mathcal{T}_n$ and $\mathcal{T}$ lead to positive Harris recurrent chains and in particular admit unique invariant measures $\pi_n^\gamma$ and $\pi^\gamma$.
\item[(ii)] $\{\pi_n^\gamma\}_{\gamma,n}$ and $\{\pi^\gamma\}_\gamma$ are tight.
\item[(iii)] Family of optimal policies $\{\gamma^*_n\}_n$ for $\mathcal{T}_n$, is an equicontinuous family of functions and the optimal policy $\gamma^*$ for $\mathcal{T}$ is continuous.
\end{itemize}
\end{assumption}
\begin{theorem}\label{equ_pol}
Suppose that Assumption \ref{conv_assmp} and Assumption \ref{her_stat_assmp} $B$, $C$, $D$ (for $\mathcal{T}_n$ and $\mathcal{T}$) hold.
Then we have that $J_\infty^*(\mathcal{T}_n)\to J^*(\mathcal{T})$ and $J_\infty(\mathcal{T},\gamma_n^*)\to J_\infty^*(\mathcal{T})$.
\end{theorem}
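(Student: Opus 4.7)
The plan is to exploit the fact that, under Assumption \ref{conv_assmp}(i), the infinite-horizon average cost of any stationary policy $\gamma$ admits the invariant-measure representation
$$J_\infty(\mathcal{T}_n,\gamma)=\int c(x,\gamma(x))\pi_n^\gamma(dx),\qquad J_\infty(\mathcal{T},\gamma)=\int c(x,\gamma(x))\pi^\gamma(dx),$$
independently of the initial state. Without the uniform ergodicity of Assumption \ref{geo_erg}, the finite-horizon route of Lemma \ref{apprx} and the ACOE routes of Theorems \ref{weak_robust}, \ref{setwise_robust} are unavailable, so the proof has to pass $n\to\infty$ directly through these invariant integrals.

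First I would extract a convenient subsequence. Since $\{\gamma_n^*\}$ is equicontinuous with values in the compact $\mathds{U}$, the Arzelà--Ascoli theorem produces $\gamma_{n_k}^*\to\hat\gamma$ uniformly on compact subsets of $\mathds{X}$ for some continuous $\hat\gamma$, and equicontinuity upgrades this to continuous convergence: $\gamma_{n_k}^*(x_k)\to\hat\gamma(x)$ whenever $x_k\to x$. By the tightness in Assumption \ref{conv_assmp}(ii) and Prokhorov's theorem, a further sub-subsequence (still labelled $n_k$) gives weak limits $\pi_{n_k}^{\gamma_{n_k}^*}\Rightarrow\hat\pi$ and $\pi^{\gamma_{n_k}^*}\Rightarrow\tilde\pi$ for some $\hat\pi,\tilde\pi\in\mathcal{P}(\mathds{X})$.

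Next I would identify both limits as $\pi^{\hat\gamma}$. Writing the invariance equation
$$\int f\,d\pi_{n_k}^{\gamma_{n_k}^*}=\int\Big(\int f(y)\mathcal{T}_{n_k}(dy\mid x,\gamma_{n_k}^*(x))\Big)\pi_{n_k}^{\gamma_{n_k}^*}(dx),\qquad f\in C_b(\mathds{X}),$$
the continuous weak convergence of $\mathcal{T}_n$, the continuous convergence of $\gamma_{n_k}^*$, and the weak continuity of $\mathcal{T}$ (Assumption \ref{her_stat_assmp}(D)) together imply that the inner integral regarded as a function of $x$ continuously converges to $\int f(y)\mathcal{T}(dy\mid x,\hat\gamma(x))$, which is bounded and continuous. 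Applying Lemma \ref{langen} on both sides yields that $\hat\pi$ is invariant for the chain $(\mathcal{T},\hat\gamma)$, so by uniqueness $\hat\pi=\pi^{\hat\gamma}$; the identical argument on the invariance equation for $\pi^{\gamma_{n_k}^*}$ (whose kernel is already $\mathcal{T}$) gives $\tilde\pi=\pi^{\hat\gamma}$.

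Finally, one more invocation of Lemma \ref{langen} with the bounded continuously-convergent integrands $x\mapsto c(x,\gamma_{n_k}^*(x))$ yields
$$J_\infty^*(\mathcal{T}_{n_k})\to J_\infty(\mathcal{T},\hat\gamma)\quad\text{and}\quad J_\infty(\mathcal{T},\gamma_{n_k}^*)\to J_\infty(\mathcal{T},\hat\gamma).$$
Running the same weak-limit identification for the \emph{fixed} continuous policy $\gamma^*$ (Assumption \ref{conv_assmp}(iii)) gives $\pi_n^{\gamma^*}\Rightarrow\pi^{\gamma^*}$, hence $J_\infty(\mathcal{T}_n,\gamma^*)\to J_\infty^*(\mathcal{T})$; combined with the optimality inequality $J_\infty^*(\mathcal{T}_n)\leq J_\infty(\mathcal{T}_n,\gamma^*)$ this forces $J_\infty(\mathcal{T},\hat\gamma)\leq J_\infty^*(\mathcal{T})$, while the reverse inequality is automatic. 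A standard subsequence-of-subsequences argument then upgrades convergence along $(n_k)$ to convergence of both full sequences to the common limit $J_\infty^*(\mathcal{T})$. The hardest part will be the invariant-measure identification in the third paragraph: I need the \emph{joint} continuous convergence of $\gamma_{n_k}^*$ and the weak convergence of $\mathcal{T}_{n_k}$ to be absorbed into an integral against the \emph{varying} measures $\pi_{n_k}^{\gamma_{n_k}^*}$, which is exactly what Lemma \ref{langen} is designed to enable.
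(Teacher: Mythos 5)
Your proposal is correct and follows essentially the same route as the paper: represent average costs via the unique invariant measures, use Arzel\`a--Ascoli on the equicontinuous optimal policies and tightness/Prokhorov on the invariant measures, identify the weak limits through the invariance equations with Lemma \ref{langen} and uniqueness, and finish with a subsequence-of-subsequences argument. The only (cosmetic) difference is the final bookkeeping: the paper reduces everything to the two differences $|J_\infty(\mathcal{T}_n,\gamma^*)-J_\infty(\mathcal{T},\gamma^*)|$ and $|J_\infty(\mathcal{T},\gamma_n^*)-J_\infty(\mathcal{T}_n,\gamma_n^*)|$ via the standard optimality bounds, whereas you identify the common limit $J_\infty(\mathcal{T},\hat\gamma)$ directly and sandwich it against $J_\infty^*(\mathcal{T})$ using the fixed policy $\gamma^*$ --- both assemblies rest on the same limit identifications.
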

\begin{proof}
We now use the following bounds. Let $\gamma_n^*$ be optimal for $\mathcal{T}_n$ and $\gamma^*$ be optimal for $\mathcal{T}$. Then, 
\begin{align}\label{usual_bounds}
&|J_\infty(\mathcal{T}_n,\gamma_n^*)-J_\infty(\mathcal{T},\gamma^*)\big|\nonumber\\
&\leq \max \bigg(J_\infty(\mathcal{T}_n,\gamma^*)-J_\infty(\mathcal{T},\gamma^*),J_\infty(\mathcal{T},\gamma_n^*)-J_\infty(\mathcal{T}_n,\gamma_n^*)\bigg),\nonumber\\
&|J_\infty(\mathcal{T},\gamma_n^*)- J_\infty^*(\mathcal{T})|\nonumber\\
&\leq |J_\infty(\mathcal{T},\gamma^*)- J_\infty(\mathcal{T}_n,\gamma^*)|+|J_\infty(\mathcal{T}_n,\gamma_n^*)-J_\infty(\mathcal{T},\gamma_n^*)|.
\end{align}
 Hence, it suffices to show that
\begin{align*}
&\big|J_\infty(\mathcal{T}_n,\gamma^*)-J_\infty(\mathcal{T},\gamma^*)\big|\to0,\\
&\big|J_\infty(\mathcal{T},\gamma_n^*)-J_\infty(\mathcal{T}_n,\gamma_n^*)\big|\to 0.
\end{align*}
First notice that for any policy $\gamma$, $J_\infty(\mathcal{T}_n,\gamma)=\int c(x)\pi^{\gamma}_n(dx)$ and $J_\infty(\mathcal{T},\gamma)=\int c(x)\pi^{\gamma}(dx)$ for all initial states $x_0\in \mathds{X}$ as the chains are positive Harris recurrent. Thus, we only need to show that $\rho(\pi^{\gamma^*}_n, \pi^{\gamma^*})\to 0$ and $\rho(\pi^{\gamma_n^*}, \pi_n^{\gamma_n^*})\to 0$ weakly since $c \in C_b(\mathds{X})$, where $\rho$ metrizes the topology of weak convergence.

For a fixed policy $\gamma^*$, since $\gamma^*$ belongs to an equicontinuous family, we have that $\mathcal{T}_n(\cdot|x_n,\gamma^*(x_n))\to \mathcal{T}(\cdot|x,\gamma^*(x))$ for any $x_n \to x$ and $\mathcal{T}(\cdot|x,\gamma^*(x))$ is weakly continuous in $x$.
Since $\pi_n^{\gamma^*}$ is a tight family, there exists a subsequence $\pi^{\gamma^*}_{n_k}$ such that $\pi^{\gamma^*}_{n_k} \to \pi^*$ weakly for some $\pi^* \in \mathcal{P}(\mathds{X})$. As $\pi^{\gamma^*}_{n_k}$ is the invariant measure for $\mathcal{T}_{n_k}$ we have that for any $f \in C_b(\mathds{X})$
\begin{align*}
\int f(x_1)\mathcal{T}_{n_k}(dx_1|x_0,\gamma^*(x_0))\pi^{\gamma^*}_{n_k}(dx_0)=\int f(x_0)\pi^{\gamma^*}_{n_k}( dx_0)
\end{align*}
using the assumption that $\mathcal{T}_{n_k}(\cdot|x_{n_k},\gamma^*(x_{n_k}))\to\mathcal{T}(\cdot|x,\gamma^*(x))$ weakly for any $x_{n_k}\to x$ and that $\pi^{\gamma^*}_{n_k}\to \pi^*$ by taking the limit $k\to \infty$, Lemma \ref{langen} gives us
\begin{align*}
\int f(x_1)\mathcal{T}( dx_1|x_0,\gamma^*(x_0))\pi^*( dx_0)=\int f(x_0)\pi^*( dx_0).
\end{align*} 
Since $\mathcal{T}$ has a unique invariant measure we can conclude that $\pi^*=\pi^{\gamma^*}$. 

So far we have proved that any converging subsequence of $\pi^{\gamma^*}_{n}$ converges weakly to $\pi^{\gamma^*}$. Now suppose that $\pi^{\gamma^*}_n$ does not converge to $\pi^{\gamma^*}$. Then, there exists an $\epsilon>0$ and a further subsequence $\pi^{\gamma^*}_{n_k}$ such that $\rho(\pi^{\gamma^*}_{n_k},\pi^{\gamma^*})>\epsilon$ for all $k$. But because of the tightness assumption there exists a further subsequence $\pi^{\gamma^*}_{n_{k_l}}$ which converges and using the same arguments above it converges to $\pi^{\gamma^*}$ which leads us to a contradiction and completes the proof.

For the sequence of policies $\gamma_n^*$, using the equicontinuity assumption with Arzel\`a-Ascoli theorem (\cite{Dudley02}), there exists a subsequence $\gamma_{n_k}^*$ converging uniformly to some $\gamma \in\Gamma$. Therefore, $\mathcal{T}_{n_k}(\cdot|x_{n_k},\gamma_{n_k}^*(x_{n_k}))\to \mathcal{T}(\cdot|x,\gamma(x))$ and  $\mathcal{T}(\cdot|x_{n_k},\gamma_{n_k}^*(x_{n_k}))\to \mathcal{T}(\cdot|x,\gamma(x))$ for any $x_{n_k} \to x$.   Hence, using the same steps above, we can show that $\pi_{n_k}^{\gamma^*_{n_k}} \to \pi^{\gamma}$ and $\pi^{\gamma^*_{n_k}} \to \pi^{\gamma}$ weakly. This completes the proof.
\end{proof}

\begin{corollary}
If $\mathds{X}$ is a finite space, $\mathds{U}$ is compact, $\mathcal{T}(x,u)$ is weakly continuous in $u$: 
$c(x,u)$ is continuous in $u$ and under every stationary policy state process $\{x_t\}$ is positive Harris recurrent then we have that $J_\infty^*(\mathcal{T}_n)\to J^*(\mathcal{T})$ and $J_\infty(\mathcal{T},\gamma_n^*)\to J_\infty^*(\mathcal{T})$.
\end{corollary}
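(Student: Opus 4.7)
The plan is to derive this corollary as a direct specialization of Theorem \ref{equ_pol} to the case where $\mathds{X}$ is finite and carries the discrete topology; the entire argument reduces to verifying, in that setting, each of Assumption \ref{her_stat_assmp} $B$, $C$, $D$ (for $\mathcal{T}$ and for every $\mathcal{T}_n$) together with the three parts of Assumption \ref{conv_assmp}. Because $\mathds{X}$ is finite, a number of topological requirements trivialize, and no new analytic input is needed beyond what the hypotheses already provide.

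First I would verify the pieces of Assumption \ref{her_stat_assmp}. Part $B$ is given. For part $C$, joint continuity of $c$ on $\mathds{X}\times\mathds{U}$ with the discrete topology on $\mathds{X}$ is equivalent to continuity of $c(x,\cdot)$ for each $x\in\mathds{X}$, which is the stated hypothesis; boundedness then follows because each $c(x,\cdot)$ is continuous on the compact set $\mathds{U}$ and there are only finitely many $x$. Weak continuity of $\mathcal{T}(\cdot|x,u)$ in $(x,u)$ similarly collapses to weak continuity in $u$ for each fixed $x$, which delivers part $D$. The same reasoning, applied to each $\mathcal{T}_n$ under the standing interpretation of this section that the stated hypotheses are shared with the approximating kernels (together with $\mathcal{T}_n\to\mathcal{T}$ in the weak sense of Definition \ref{def_kernel}), handles the approximating models.

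Finally I would check Assumption \ref{conv_assmp}. Part (i) --- positive Harris recurrence under every stationary policy together with the existence of unique invariant probability measures $\pi^\gamma$ and $\pi_n^\gamma$ --- is immediate from the hypothesis and the standard fact that positive Harris recurrence forces uniqueness of the invariant measure. Part (ii), tightness of $\{\pi_n^\gamma\}_{\gamma,n}$ and $\{\pi^\gamma\}_\gamma$, is automatic because all these probability measures live on the common finite set $\mathds{X}$. Part (iii), equicontinuity of $\{\gamma_n^*\}_n$ and continuity of $\gamma^*$, is automatic as well: in the discrete topology every map $\gamma:\mathds{X}\to\mathds{U}$ is continuous, and every family of such maps is equicontinuous, since $\{x\}$ is itself a neighborhood of each $x\in\mathds{X}$. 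With all hypotheses of Theorem \ref{equ_pol} now in force, both conclusions $J_\infty^*(\mathcal{T}_n)\to J_\infty^*(\mathcal{T})$ and $J_\infty(\mathcal{T},\gamma_n^*)\to J_\infty^*(\mathcal{T})$ follow directly; there is essentially no obstacle, as the point of the corollary is precisely that the restrictive equicontinuity and tightness conditions of Theorem \ref{equ_pol} become vacuous once $\mathds{X}$ is finite.
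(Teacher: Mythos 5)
Your proposal is correct and matches the paper's intent exactly: the corollary is stated immediately after Theorem \ref{equ_pol} precisely as its finite-state specialization, where (as you verify) tightness of the invariant measures and equicontinuity/continuity of the policies become vacuous for the discrete topology on a finite $\mathds{X}$, and Assumption \ref{her_stat_assmp} $B$, $C$, $D$ reduces to the stated per-$x$ continuity hypotheses. Since the paper gives no separate argument beyond this reduction, your verification is essentially the same proof, spelled out.
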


\begin{exmp} [Adaptive Control]
Suppose a controlled model is given by the dynamics
\begin{align*}
x_{t+1}=Ax_t+Bu_t+w_t
\end{align*}
where $w_t$ is a i.i.d. Gaussian noise process. Assume that $A$ and $B$ are unknown by the controller. However, the controller can estimate $A$ and $B$ in a consistent way so that $A_n\to A$ and $B_n\to B$; with many results reported in the literature \cite{aastrom2013adaptive,kumar2015stochastic,goodwin1981discrete,goodwin2014adaptive}. Then, building on Section \ref{ornek1}(i) and \ref{ornek1}(vi), we have $\mathcal{T}_n(\cdot|x_n,u_n)\to\mathcal{T}(\cdot|x,u)$ weakly for any $x_n\to x$ and $u_n \to u$. If further we have that the step-wise cost function is in the form $c(x,u)=x^TQx+u^TRu$, then the optimal policies are linear and also equicontinuous if the model is controllable \cite{bertsekas}. Furthermore, since the noise is Gaussian, the chain is Lebesgue irreducible and thus there exists a unique invariant measure and it can be reached from any initial point. Hence, the assumptions of Theorem \ref{equ_pol} are satisfied and the continuity and robustness can be established for this example. In other words;  the optimal cost for the estimates $A_n$ and $B_n$, $J_\infty^*(\mathcal{T}_n)$, converges to optimal cost for $A$ and $B$, $J_\infty^*(\mathcal{T})$. Furthermore, if we apply the optimal policies $\gamma_n^*$ designed for $A_n$ and $B_n$, to the true model ($A$, $B$), we get $J_\infty(\mathcal{T},\gamma_n^*)\to J_\infty^*(\mathcal{T}).$ \hfill $\diamond$
\end{exmp}

\section{Application to Data-Driven Model Learning}\label{secLearning}

\subsection{Data-Driven Model Learning and Robustness}

For unknown probability measures, a common practice is to learn them via test inputs
or empirical observations.  Let $\{(X_i), i \in \mathbb{N} \}$ be an $\mathbb{X}$-valued i.i.d random variable
sequence generated according to some distribution $\mu$. 

Defining for every (fixed) Borel $B \subset \mathbb{X}$, and $n \in
\mathbb{N}$, the empirical occupation measures
\[
\mu_n(B)=
\frac{1}{n}\sum_{i=1}^{n} 1_{\{X_i \in B\}},
\]
one has $\mu_n(B) \to \mu(B)$ almost surely (a.s$.$) by the strong law
of large numbers. Also, $\mu_n \to \mu$ weakly with probability one (\cite{Dudley02}, Theorem 11.4.1).

However, $\mu_n$ can not converge to $\mu$ in total variation, in general. On the other hand, if we know that $\mu$ admits a density, we can find estimators to estimate $\mu$ under total variation \cite{Devroye85}. As discussed above, the empirical averages converge almost surely. By a similar reasoning, for a given bounded measurable function $f$, $\int \mu_n(dx) f(x)$ converges to $\int \mu(dx) f(x)$. This then also holds for any {\it countable} collection of functions, $f_1, f_2, \cdots$. A relevant question is the following: Can one ensure uniform convergence (over a family of functions) with arbitrary precision by only guaranteeing convergence for a finite or countably infinite collection of functions? This entails the problem of covering a family of functions with arbitrarily small neighborhoods of finitely many functions under an appropriate distance metric. The answer to this question is studied by the theory of {\it empirical risk minimization}: In the learning theoretic context when one tries to estimate the source distribution, the convergence of optimal costs under $\mu_n$ to the cost optimal for $\mu$ is called the {\it consistency of empirical risk minimization} \cite{Vap00}. In particular, if the following uniform convergence holds,
\begin{align}\label{unif_conv}
\lim_{n \to \infty} \sup_{f \in {\cal F}} \bigg|\int f(x) \mu_n(dx) - \int f(x) \mu(dx) \bigg| =0,
\end{align}
for a class of measurable functions ${\cal F}$, then ${\cal F}$ is called a {\em $\mu$-Glivenko-Cantelli class} \cite{Dudley}. If the class ${\cal F}$ is $\mu$-Glivenko-Cantelli for every $\mu$, it is called a {\it universal Glivenko-Cantelli} class. One example of a universal Glivenko-Cantelli family of real functions on $\mathbb{R}^N$ is the
family $\{f:\, \|f\|_{BL}\le M\}$ for some $0<M<\infty$, where
$\|f\|_{BL}=\|f\|_\infty + \sup_{x_1\neq x_2}\frac{|f(x_1)-f(x_2)|}{|x_1-x_2|}$ ( \cite{Dudley}). For related characterizations and further examples, see \cite{raginsky2013empirical,VanHandel,dudley1999uniform}.


For our analysis, we will mainly make use of the fact that the empirical occupation measures $\mu_n$ converge weakly almost surely to the true underlying measure $\mu$.

The above discussion is for the i.i.d. observations and for fixed probability measures. In the case of unknown dynamics or unknown transition models, for every different $(x,u)\in \mathds{X}\times \mathds{U}$ pair, $\mathcal{T}(\cdot|x,u)$ is a different probability measure, and hence there are possibly uncountably many unknown probability measures. Thus, in this section we focus on some special settings.

Let $\mathcal{T}(\cdot|x,u)$ be a transition kernel given previous state and action variables $x\in \mathds{X},u\in\mathds{U}$, which is unknown to the decision maker (DM). Suppose the DM builds a model for the transition kernels, $\mathcal{T}_n(\cdot|x,u)$, for all possible $x\in \mathds{X}, u \in \mathds{U}$ by collecting training data $\{x_t, u_t, \quad t \leq n\}$ from the evolving system. 

The question we are interested in is that, do we have that the cost calculated under $\mathcal{T}_n$ converges to the true cost (i.e., do we have that the cost obtained from applying the optimal policy for the empirical model converges to the true cost as the training length increases)? We will refer to this property as being {\it asymptotically robust under empirical learning}. In the following, we provide two setups. 

\begin{theorem}
Consider the example in Section \ref{ornek1}(vi): Let a controlled model be given as
\[x_{t+1}=F(x_t,u_t,w_t),\] where $\{w_t\}$ is an i.i.d. noise process. Suppose that $F(x,u,\cdot): \mathds{W}\to \mathds{X}$ is invertible for all fixed $(x,u)$ and $F(x,u,w)$ is continuous and bounded on $\mathds{X}\times\mathds{U}\times\mathds{W}$. If we construct the empirical measures for the noise process $w_t$  such that for every (fixed) Borel $B \subset \mathds{W}$, and for every $n \in\mathds{N}$, the empirical occupation measures are
\begin{align*}\label{inverse_emp}
\mu_n(B)=\frac{1}{n}\sum_{i=1}^{n} \mathds{1}_{\{F^{-1}_{x_{i-1},u_{i-1}}(x_i) \in B\} }
\end{align*}
where $F^{-1}_{x_{i-1},u_{i-1}}(x_i)$ denotes the inverse of $F(x_{i-1},u_{i-1},w): \mathds{W}\to \mathds{X}$ for given $(x_{i-1},u_{i-1})$. 
Using the noise measurements, we construct the empirical transition kernel estimates for any $(x_0,u_0)$ and Borel $B$ as
\begin{align*}
\mathcal{T}_n(B|x_0,u_0)=\mu_n(F^{-1}_{x_0,u_0}(B)).
\end{align*}
Then, under empirical learning, by Theorem \ref{weak_robust}, asymptotic robustness under empirical learning holds.
\end{theorem}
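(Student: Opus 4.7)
The plan is to reduce the theorem to a direct application of Theorem \ref{weak_robust}, so the whole task amounts to checking its hypotheses for the empirically constructed kernels $\mathcal{T}_n$. The structural conditions in Assumption \ref{her_stat_assmp} (B), (C), (D) and the ergodicity condition (A) concern only the underlying controlled model and cost and are part of the background hypotheses of Example \ref{ornek1}(vi); the only nontrivial point is the \emph{continuous weak convergence} $\mathcal{T}_n(\cdot|x_n,u_n)\to\mathcal{T}(\cdot|x,u)$ with probability one, for every $(x_n,u_n)\to(x,u)$.

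First I would observe that, by the invertibility of $F(x,u,\cdot)$, each inverse evaluation $F^{-1}_{x_{i-1},u_{i-1}}(x_i)$ coincides exactly with the realized noise $w_{i-1}$. Consequently $\mu_n$ is the ordinary empirical measure of the i.i.d. sample $(w_0,\dots,w_{n-1})$, and Varadarajan's theorem (Theorem 11.4.1 of \cite{Dudley02}) gives $\mu_n\to\mu$ weakly on an event $\Omega_0$ of probability one. Fix a sample path in $\Omega_0$. For any $h\in C_b(\mathds{X})$, change of variables gives
\begin{align*}
\int h(x')\mathcal{T}_n(dx'|x_n,u_n) & = \int h(F(x_n,u_n,w))\mu_n(dw),\\
\int h(x')\mathcal{T}(dx'|x,u) & = \int h(F(x,u,w))\mu(dw).
\end{align*}
Joint continuity of $F$ and continuity of $h$ imply that the integrand $w\mapsto h(F(x_n,u_n,w))$ converges continuously to $w\mapsto h(F(x,u,w))$ along any $w_n\to w$, while the integrands are uniformly bounded by $\|h\|_\infty$. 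Lemma \ref{langen} then yields convergence of the integrals, hence $\mathcal{T}_n(\cdot|x_n,u_n)\to\mathcal{T}(\cdot|x,u)$ weakly. This verifies the continuous-weak-convergence hypothesis on $\Omega_0$, and Theorem \ref{weak_robust} applied pathwise delivers the claim.

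The main obstacle is conceptual rather than computational: one must be careful that the almost-sure weak convergence $\mu_n\to\mu$, which is a statement about a random sequence of measures, is compatible with the deterministic continuous-weak-convergence hypothesis required by Theorem \ref{weak_robust}. This is resolved by fixing a single probability-one event $\Omega_0$ on which $\mu_n\to\mu$ weakly and then arguing pathwise; no uniformity in $(x_n,u_n)$ beyond the continuous-convergence condition of Lemma \ref{langen} is needed. A secondary point worth acknowledging is that the ergodicity condition (A) for every $\mathcal{T}_n$ along the sample path must be inherited from the structural hypotheses on $F$ and $\mu$ rather than imposed separately; under the boundedness and continuity of $F$ assumed in the setup, this follows from standard arguments, but it should be stated explicitly to make the invocation of Theorem \ref{weak_robust} fully rigorous.
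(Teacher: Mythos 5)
Your verification of the continuous weak convergence is exactly the paper's route: invertibility identifies $F^{-1}_{x_{i-1},u_{i-1}}(x_i)$ with the realized noise, so $\mu_n$ is the empirical measure of an i.i.d.\ sample, Varadarajan's theorem gives $\mu_n\to\mu$ weakly on a probability-one event, and the change of variables together with Lemma \ref{langen} (continuous convergence of $w\mapsto h(F(x_n,u_n,w))$ under joint continuity of $F$) yields $\mathcal{T}_n(\cdot|x_n,u_n)\to\mathcal{T}(\cdot|x,u)$ weakly pathwise. That part is fine and matches the paper.

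The genuine gap is your treatment of the ergodicity hypothesis, Assumption \ref{her_stat_assmp}(A). You assert it is ``part of the background hypotheses of Example \ref{ornek1}(vi)'' and, later, that it ``follows from standard arguments'' from the boundedness and continuity of $F$. Neither is true: Example \ref{ornek1}(vi) assumes only invertibility, continuity and boundedness of $F$, and continuity plus boundedness do not imply any uniform ergodicity --- the paper's own first counterexample ($\mathcal{T}(\cdot|x)=\delta_x$, which arises from a perfectly continuous bounded $F$) shows that without ergodicity the average-cost problem is not even continuous under weak convergence, so Theorem \ref{weak_robust} cannot be invoked. Moreover, what is required is uniform (geometric) ergodicity over stationary policies \emph{and} uniformly over the empirical kernels $\mathcal{T}_n$, i.e.\ condition (f)/(h) of Theorem \ref{erg_conds}; this must hold for the random, purely atomic empirical models as well, which is not a soft consequence of continuity. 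The paper closes this hole by imposing additional structure inside the proof: $F(x,u,w)=G(x,u)+w$ with $G$ of bounded range and $w_t$ admitting a density positive everywhere, and then citing \cite[Example 2.2]{mcdonald2020exponential} to obtain Theorem \ref{erg_conds}(h), hence positive Harris recurrence and geometric ergodicity under every stationary policy. Your proof needs an explicit condition of this kind (or an equivalent minorization/contraction argument covering both $\mathcal{T}$ and every $\mathcal{T}_n$) before Theorem \ref{weak_robust} can legitimately be applied pathwise; as written, the key hypothesis that distinguishes the average-cost setting from the discounted one is simply assumed away.
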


\begin{proof}
By the analysis in Section \ref{ornek1}(vi), we have that \[\mathcal{T}_n(\cdot|x_n,u_n)\to\mathcal{T}(\cdot|x,u),\] weakly for any $(x_n,u_n) \to (x,u)$ almost surely. Furthermore, additionally let \[F(x_t,u_t,w_t) = G(x_t,u_t) + w_t\] where (in addition to the assumed regularity conditions on $F$) we also have that $G$ has a bounded range and that $w_t$ has a density which is positive everywhere. Then, by \cite[Example 2.2]{mcdonald2020exponential} it follows that Theorem \ref{erg_conds}(h) holds. 
This ensures that under any stationary policy $x_t$ is positive Harris recurrent and also geometrically ergodic.
We thus conclude that under empirical learning, by Theorem \ref{weak_robust}, the loss due to an incorrect initial modelling error is zero.
\end{proof}

\begin{theorem}
Suppose we are given the following dynamics for finite state space, $\mathds{X}$, and finite action space, $\mathds{U}$,
\[ x_{t+1}= f(x_t,u_t,w_t)\]
where $\{w_t\}$ is an i.i.d.noise process and the noise model is unknown.
 Suppose again that there is an initial training period so that under some policy, every $x,u$ pair is visited infinitely often if training were to continue indefinitely, but that the training ends at some finite time. Let us assume that, through this training, we empirically learn the transition dynamics such that for every (fixed) Borel $B \subset \mathds{X}$, for every $x\in \mathds{X}$, $u \in \mathds{U}$ and $n \in\mathds{N}$, the empirical occupation measures are
\[
\mathcal{T}_n(B|x_0=x,u_0=u)=
\frac{\sum_{i=1}^{n} 1_{\{X_i \in B,X_{i-1}=x,U_{i-1} = u\}}}{\sum_{i=1}^{n} 1_{\{X_{i-1}=x,U_{i-1} = u\}}}.
\]
Then we have that $J_\beta^*(\mathcal{T}_n) \to J_\beta^*(\mathcal{T})$ and $J_\beta(\mathcal{T},\gamma_n^*)\to J_\beta^*(\mathcal{T})$, 
where $\gamma_n^*$ is the optimal policy designed for $\mathcal{T}_n$. 
Then, asymptotic robustness under empirical learning holds.
\end{theorem}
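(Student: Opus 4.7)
The plan is to factor the claim into two ingredients. First, a measure-theoretic/probabilistic step: on a full-measure event over the training data, the empirical kernels $\mathcal{T}_n(\cdot|x,u)$ converge to $\mathcal{T}(\cdot|x,u)$ in total variation for every $(x,u)$ simultaneously. Second, a deterministic control-theoretic step: on that event, apply the discounted-cost continuity result Theorem \ref{fully_weak_suff_cont} to conclude $J_\beta^*(\mathcal{T}_n)\to J_\beta^*(\mathcal{T})$ and then deduce the robustness statement $J_\beta(\mathcal{T},\gamma_n^*)\to J_\beta^*(\mathcal{T})$ by a triangle-inequality argument. Working in finite $\mathds{X}$ and $\mathds{U}$ makes all convergence notions (weak, setwise, total variation, and continuous weak convergence in $(x,u)$) collapse to the same condition, which is what permits the direct invocation of Theorem \ref{fully_weak_suff_cont}.

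For the first step, let $N_n(x,u):=\sum_{i=1}^{n}\mathds{1}_{\{X_{i-1}=x,U_{i-1}=u\}}$. By hypothesis on the training policy, every $(x,u)$ pair is visited infinitely often almost surely, so $N_n(x,u)\to\infty$ a.s. for each of the finitely many $(x,u)$. Conditioning on the successive visit times to $(x,u)$ and using the Markov property together with the i.i.d. structure of $\{w_t\}$, the sequence of successor states sampled at these visit times is i.i.d.\ with law $\mathcal{T}(\cdot|x,u)$. The strong law of large numbers applied along this random subsequence, combined with the fact that $N_n(x,u)\to\infty$, yields $\mathcal{T}_n(\{y\}|x,u)\to\mathcal{T}(\{y\}|x,u)$ a.s.\ for every $y\in\mathds{X}$ and every $(x,u)$. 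Since $\mathds{X}\times\mathds{U}$ is finite, intersecting the corresponding finitely many probability-one events produces a single event of probability one on which $\|\mathcal{T}_n(\cdot|x,u)-\mathcal{T}(\cdot|x,u)\|_{TV}\to 0$ for all $(x,u)$ simultaneously.

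For the second step, fix a sample point in this full-measure event. Because $\mathds{X}$ and $\mathds{U}$ are finite (and thus compact and discrete), every real-valued cost $c$ is continuous and bounded, every transition kernel is weakly continuous in $(x,u)$, and any sequence $(x_n,u_n)\to(x,u)$ is eventually constant; hence the pointwise convergence just established is equivalent to the hypothesis $\mathcal{T}_n(\cdot|x_n,u_n)\to\mathcal{T}(\cdot|x,u)$ weakly for all $(x_n,u_n)\to(x,u)$ required by Theorem \ref{fully_weak_suff_cont}. Applying that theorem gives $J_\beta(\mathcal{T}_n,\gamma_n^*)\to J_\beta(\mathcal{T},\gamma^*)=J_\beta^*(\mathcal{T})$, which is precisely $J_\beta^*(\mathcal{T}_n)\to J_\beta^*(\mathcal{T})$. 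For the robustness statement, I use the bound
\begin{align*}
|J_\beta(\mathcal{T},\gamma_n^*)-J_\beta^*(\mathcal{T})|\leq |J_\beta(\mathcal{T},\gamma_n^*)-J_\beta(\mathcal{T}_n,\gamma_n^*)|+|J_\beta^*(\mathcal{T}_n)-J_\beta^*(\mathcal{T})|.
\end{align*}
The second term vanishes by what we just proved. For the first term, a standard simulation-lemma style estimate (iterating the Bellman equation and using $\sup_{x,u}\|\mathcal{T}_n(\cdot|x,u)-\mathcal{T}(\cdot|x,u)\|_{TV}\to 0$) gives $|J_\beta(\mathcal{T},\gamma)-J_\beta(\mathcal{T}_n,\gamma)|\leq\tfrac{\beta\|c\|_\infty}{(1-\beta)^2}\sup_{x,u}\|\mathcal{T}_n(\cdot|x,u)-\mathcal{T}(\cdot|x,u)\|_{TV}$, uniformly in $\gamma$, which tends to zero on our event.

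The main obstacle is the first step: although it is intuitively clear, one must justify the conditional strong law carefully, since the successor states are i.i.d.\ only along the random subsequence of visit times to $(x,u)$, not along the original time index. Once that conditional SLLN is established and the full-measure event is constructed by intersecting over finitely many $(x,u)$, the rest of the argument is a direct application of results already in the paper and in \cite{kara2020robustness}, because finiteness of $\mathds{X}$ and $\mathds{U}$ trivializes all of the regularity hypotheses in Theorem \ref{fully_weak_suff_cont}.
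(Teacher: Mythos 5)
Your probabilistic step and your discounted-cost argument are both sound, and on the empirical-convergence step you are in fact more detailed than the paper, which simply cites \cite[Corollary 5.1]{kara2020robustness} for the almost-sure weak convergence and then observes that finiteness of $\mathds{X}$ upgrades it to total variation. The conditional strong law along the random visit times, followed by intersecting the finitely many probability-one events, is exactly the right justification for that step.

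The gap is in what you conclude from that event. The paper's own proof never touches the discounted-cost machinery: its target is the \emph{average-cost} robustness $J_\infty(\mathcal{T},\gamma_n^*)\to J_\infty^*(\mathcal{T})$, which is what ``asymptotic robustness under empirical learning'' means in Section \ref{secLearning} (the $J_\beta$'s in the displayed statement appear to be a slip carried over from the discounted-cost companion results). To reach that conclusion the paper must add the hypothesis $\mathcal{T}(\cdot|x,u)>0$ for all $x,u$, which for large $n$ yields the uniform ergodicity conditions of Theorem \ref{erg_conds} for both $\mathcal{T}$ and the empirical kernels $\mathcal{T}_n$, and only then does Theorem \ref{TV_robust} apply. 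Your argument has no ergodicity step at all, and it cannot be rescued by pushing $\beta\to 1$: your simulation-lemma constant $\beta\|c\|_\infty/(1-\beta)^2$ diverges, and the counterexample at the start of Section \ref{TV_section} shows that the hypotheses of Theorem \ref{fully_weak_suff_cont} are genuinely insufficient for average-cost continuity. So while the two displayed $J_\beta$ claims are correctly proved, you have established a weaker statement than the one the paper's proof is after; the missing ingredient --- positivity of the true kernel, hence uniform ergodicity for large $n$, hence applicability of the average-cost robustness theorem under total variation convergence --- is precisely what separates this result from its discounted-cost counterpart.
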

\begin{proof} By \cite[Corollary 5.1]{kara2020robustness}, $\mathcal{T}_n(\cdot|x,u) \to \mathcal{T}(\cdot|x,u)$ weakly for every $x \in \mathds{X}$, $u \in \mathds{U}$ almost surely (by law of large numbers). Since the spaces are finite, we also have $\mathcal{T}_n(\cdot|x,u) \to \mathcal{T}(\cdot|x,u)$ under total variation. 

Suppose further that we have $\mathcal{T}(\cdot|x,u) > 0$ for every $x,u$. Then, for large enough $n$, we will have uniform ergodicity by Theorem \ref{erg_conds}. As a result, Theorem \ref{TV_robust} will apply and we will have consistency under empirical learning.
\end{proof}


\subsection{Adaptive Learning}
Suppose we adaptively learn the transition kernel so that at time $n$ we have an estimated kernel ${\cal T}_n$, and accordingly apply an optimal policy for some time period after $n$ based on our improved model. In the following, we will build on Theorem \ref{weak_robust}, Theorem \ref{setwise_robust} or Theorem \ref{TV_robust} to arrive at robustness (and asymptotic consistency) of such an adaptive policy. Let $T_1, T_2, \cdots$ be a sequence of increasing integers that satisfy
\begin{equation}
  \label{eq_kopt2}
\lim_{k\to \infty} \frac{\sum_{l=1}^k T_l}{T_k} =1.
\end{equation}
Suppose that we apply the control policy $\gamma_{n_k}$, which is optimal for ${\cal T}_{n_k}$ with $n_k = \sum_{l=1}^k T_l,$ from $n_{k}$ until $n_{k+1}$. Notice that the length of the time intervals we update the policy grows to $\infty$ with this setup. Call this adaptive policy $\tilde{\gamma}$. Since the cost is bounded and the rate of convergence to the invariant measure is uniform over policies and over initial states $x \in \mathds{X}$ under the conditions of ergodicity in Theorem \ref{erg_conds}, we have that
\begin{align*}
& J_\infty(\mathcal{T},\tilde{\gamma},x) = \lim_{T \to \infty} \frac{\sum_{t={0}}^{T-1} E_x[c(x_t,\tilde{\gamma}_{t}(x_t))]}{T} \\
& = \lim_{k \to \infty} \frac{\sum_{t={n_k}}^{n_{k+1}} E_x[c(x_t,\gamma_{n_k}(x_t))]}{n_{k+1}}  = \lim_{k \to \infty} J_\infty(\mathcal{T},\gamma_{n_k},x).
\end{align*}
In the above, the last two steps follow from the fact that $n_{k+1}-n_k \to \infty$ and $n_{k+1}\to\infty$ at the same rate by (\ref{eq_kopt2}) and that $\gamma_{n_k}$ is optimal from time $n_k$ to $n_{k+1}$. By the results we have established, if ${\cal T}_{n_k} \to {\cal T}$ under any of the senses established by Theorem \ref{weak_robust}, Theorem \ref{setwise_robust} or Theorem \ref{TV_robust}, we will have a strong form of robustness: the cost to due an incorrect initial modeling error will be zero. 

\section{Conclusion}
We studied regularity properties of optimal cost for the average infinite horizon setups on the space of transition kernels, and applications to robustness of optimal control policies designed for an incorrect model applied to an actual system. We applied our results to adaptive control and data-driven learning and established empirical consistency results.

\appendix

\section{Proof of Equation (\ref{fixed_point_span_conv})}\label{fixed_point_span_conv_app}
We first define the following operator $\hat{T}_{n,z}$ for some fixed $z\in \mathds{X}$ by
\begin{align*}
\hat{T}_{n,z} v(x):= \hat{T}_nv(x)-\hat{T}_nv(z)
\end{align*}
where $\hat{T}_n$ is as in (\ref{hat_T}).
 We write
\begin{align*}
&\hat{v}_n^*(x_n)-v^*(x)= \big(\hat{v}_n^*(x_n)-\hat{T}_{n,z}^kv^*(x_n)\big) \\
& \quad +\big(\hat{T}_{n,z}^kv^*(x_n)-v^*(x)\big)
\end{align*}
where $\hat{T}_{n,z}^k$ is the operator $\hat{T}_{n,z}$ applied $k$ consecutive times. We note that $\hat{T}_{n,z}$ is also a contraction under the $span$ semi-norm. Hence, the first term converges to some $\hat{c}_1$ as $k\to \infty$ uniformly for all $n$ since $\hat{T}_{n,z}$ is a contraction uniformly for all $n$ under the span seminorm and its fixed point is $\hat{v}_n^*$. 

For the second term, we wish to show that $\big(\hat{T}_{n,z}^kv^*(x_n)-v^*(x)\big)\to -v^*(z)$ as $n\to\infty$ for every fixed $k<\infty$ for all $x_n\to x$. We prove this by induction.
For $k=1$, we have
\begin{align*}
\hat{T}_{n,z}v^*(x_n)=&c(x_n,\gamma_n^*(x_n))+\int v^*(y)\mathcal{T}(dy|x_n,\gamma_n^*(x_n))\\
&-\bigg(c(z,\gamma_n^*(z))+\int v^*(y)\mathcal{T}(dy|z,\gamma_n^*(z))\bigg).
\end{align*}
Now assume that $|\hat{T}_{n_k,z}v^*(x_{n_k})-v^*(x)+v^*(z)|>\epsilon$ for some $\epsilon >0$ for every $k$ along some subsequence $n_k$. We know that there exists a further subsequence, say $n_{k_l}$ along which $\gamma^*_{n_{k_l}}(x_{n_{k_l}})\to u^*$ and $\gamma^*_{n_{k_l}}(z)\to u^{**}$ for some $u^*, u^{**}\in \mathds{U}$ where $u^*$ is an optimal action for the state $x$ for the kernel $\mathcal{T}$ and $u^{**}$ is optimal for $z$.  If we take the limit along this subsequence, i.e.
\begin{align*}
&\lim_{l\to \infty} \hat{T}_{n_{k_l},z}v^*(x_{n_{k_l}}) \\
&=\lim_{l\to\infty}\bigg(c(x_{n_{k_l}},\gamma_{n_{k_l}}^*(x_{n_{k_l}}))+\int v^*(y)\mathcal{T}(dy|x_{n_{k_l}},\gamma_{n_{k_l}}^*(x_{n_{k_l}}))\bigg)\\
&\qquad-\lim_{l\to\infty}\bigg(c(z,\gamma_{n_{k_l}}^*(z))+\int v^*(y)\mathcal{T}(dy|z,\gamma_{n_{k_l}}^*(z))\bigg)\\
& =c(x,u^*)+\int v^*(y)\mathcal{T}(dy|x,u^*)\\
&\qquad- \bigg(c(z,u^{**})+\int v^*(y)\mathcal{T}(dy|z,u^{**})\bigg)\\
&=v^*(x)+j^*-v^*(z)-j^*=v^*(x)-v^*(z).
\end{align*}
Hence we reach a contradiction and it must be that $\hat{T}_{n,z}v^*(x_n)-v^*(x)\to -v^*(z)$. Now assume that the claim holds for $k$ that is $\big(\hat{T}_{n,z}^kv^*(x_n)-v^*(x)\big)\to -v^*(z)$.
\begin{align*}
& \hat{T}^{k+1}_{n,z}v^*(x_n) =  c(x_n,\gamma_n^*(x_n))+\int \hat{T}_{n,z}^kv^*(y)\mathcal{T}(dy|x_n,\gamma_n^*(x_n)) \\
&\qquad  -\bigg(c(z,\gamma_n^*(z))+\int \hat{T}_{n,z}^kv^*(y)\mathcal{T}(dy|z,\gamma_n^*(z))\bigg).
\end{align*}
If we take a subsequence, indexed by say $m$, along which $\gamma_m^*(x_m)\to u^*$ and $\gamma_m^*(z)\to u^{**}$ where $u^*$ is an optimal action for $x$ and $u^{**}$ is optimal for $z$, then by taking the limit along this subsequence
\begin{align*}
&\lim_{m\to\infty}\hat{T}^{k+1}_mv^*(x_m)\\
&=\lim_{m\to\infty}c(x_m,\gamma_m^*(x_m))+\int \hat{T}_m^kv^*(y)\mathcal{T}(dy|x_m,\gamma_m^*(x_m))\\
&\quad-\lim_{m\to\infty}c(z,\gamma_m^*(z))+\int \hat{T}_m^kv^*(y)\mathcal{T}(dy|z_m,\gamma_m^*(z))\\
&=c(x,u^*)+\int \big(v^*(y)-v^*(z)\big)\mathcal{T}(dy|x,u^*)\\
&\quad-\bigg(c(z,u^{**})+\int \big(v^*(y)-v^*(z)\big)\mathcal{T}(dy|z,u^{**})\bigg)\\
&=v^*(x)-v^*(z).
\end{align*}
Hence, a similar contradiction argument, we used for the case $k=1$ yields that $\big(\hat{T}_{n,z}^kv^*(x_n)-v^*(x)\big)\to -v^*(z)$ as $n\to\infty$ for every fixed $k<\infty$ for all $x_n\to x$. Thus, we have that 
\begin{align*}
&\hat{v}_n^*(x_n)-v^*(x)= \big(\hat{v}_n^*(x_n)-\hat{T}_{n,z}^kv^*(x_n)\big)\\
&+\big(\hat{T}_{n,z}^kv^*(x_n)-v^*(x)\big)\to \hat{c}_1-v^*(z):=\hat{c}.
\end{align*}

\section{Ergodicity conditions on Markov Chains}
The following theorem is stated here for easy reference.

\begin{theorem}\label{erg_conds}\cite[Theorem 3.2]{HeMoRo91}
Consider the following.
\begin{itemize}
\item [a.] There exists a state $x^*\in\mathds{X}$ and a number $\beta>0$ such that $\mathcal{T}(\{x^*\}|x,\gamma)\geq\beta$, for all  $x\in\mathds{X},\gamma\in\Gamma_s$.
\item[b.] There exists a positive integer $t$ and a non-trivial measure $\mu$ on $\mathds{X}$ such that $\mathcal{T}^t(\cdot|x,\gamma)\geq\mu(\cdot)$ for all $x\in\mathds{X},\gamma\in\Gamma_s$.
\item [c.] For each $\gamma\in\Gamma_s$, the transition kernel $\mathcal{T}(dy|x,\gamma)$ has a density $p(y|x,\gamma)$ with respect to a sigma-finite measure $m$ on $\mathds{X}$, and there exist $\epsilon>0$ and $C\in\mathcal{B}(\mathds{X})$ such that $m(C)>0$ and $p(y|x,\gamma)\geq\epsilon$ for all $y\in C, x\in \mathds{X}, \gamma\in\Gamma_s$.
\item[d.] For each $\gamma\in\Gamma_s$, $\mathcal{T}(dy|x,\gamma)$ has a density $p(y|x,\gamma)$ with respect to a sigma-finite measure $m$ on $\mathds{X}$, and $p(y|x,\gamma)\geq p_0(y)$ for all $x,y\in\mathds{X}$, $\gamma\in\Gamma_s$, where $p_0$ is a non-negative measurable function with $\int p_0(y)m(dy)>0$.
\item[e.] There exists a positive integer $t$ and a measure $\mu$ on $\mathds{X}$ such that $\mu(\mathds{X})<2$ and $\mathcal{T}^t(\cdot|x,\gamma)\leq\mu(\cdot)$ for all $x\in\mathds{X}, \gamma\in\Gamma_s$.
\item[f.] There exists a positive integer $t$ and a positive number $\beta<1$ such that $\|\mathcal{T}^t(\cdot|x,\gamma)-\mathcal{T}^t(\cdot|x',\gamma)\|_{TV}\leq 2\beta$ for all $x,x'\in\mathds{X}, \gamma\in\Gamma_s$.
\item[g. ] There exists a positive integer $t$ and a positive number $\beta$ for which the following holds: For each $\gamma\in\Gamma_s$, there is a probability measure $\mu_\gamma$ on $\mathds{X}$ such that $\mathcal{T}^t(\cdot|x,\gamma)\geq \beta\mu_\gamma(\cdot)$ for all $x\in \mathds{X}$.
\item[h. ] There exist positive numbers $c$ and $\beta$, with $\beta<1$, for which the following holds: For each $\gamma\in\Gamma_s$, there is a probability measure $p_\gamma$ on $\mathds{X}$ such that $\|\mathcal{T}^t(\cdot|x,\gamma)-p_\gamma(\cdot)\|_{TV}\leq c\beta^t$ for all $x\in\mathds{X}, t\in\N$.
\item[i. ] The state process is uniformly ergodic such that 
$\lim_{t\to\infty}\|\mathcal{T}^t(\cdot|x,\gamma)-p_\gamma(\cdot)\|_{TV}=0$
uniformly in  $x\in\mathds{X}$  and $\gamma\in\Gamma_s$.
\end{itemize}
The conditions above are related as follows:
\begin{align*}
&a \to b\\
&e\to f\\
&c\to d \to b  \to f \leftrightarrow g \leftrightarrow h \leftrightarrow i.
\end{align*}
\end{theorem}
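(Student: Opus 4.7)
The plan is to treat this theorem as a catalogue of classical implications among standard ergodicity conditions and prove them in three tiers of increasing difficulty, closely following the original argument in \cite{HeMoRo91}: trivial direct constructions, Doeblin-type contraction arguments, and finally the equivalence chain $f \leftrightarrow g \leftrightarrow h \leftrightarrow i$.

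First I would dispatch the elementary implications by direct construction. For $a\to b$, set $t=1$ and $\mu=\beta\delta_{x^*}$; the inequality $\mathcal{T}(\{x^*\}\cap A|x,\gamma)\geq\beta\mathds{1}_{\{x^*\in A\}}$ immediately gives $\mathcal{T}(\cdot|x,\gamma)\geq\mu$. For $c\to d$, take $p_0(y)=\epsilon\mathds{1}_C(y)$, which is measurable and satisfies $\int p_0\,dm=\epsilon\,m(C)>0$. For $d\to b$, put $\mu(dy)=p_0(y)\,m(dy)$ with $t=1$: $\mathcal{T}(A|x,\gamma)=\int_A p(y|x,\gamma)m(dy)\geq\int_A p_0\,dm=\mu(A)$, and $\mu$ is non-trivial by hypothesis. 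The implication $h\to i$ is also free, since $c\beta^t\to 0$ uniformly.

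Next I would establish the two contraction-from-bounds implications. For $b\to f$ (the classical Doeblin argument), decompose $\mathcal{T}^t(\cdot|x,\gamma)=\mu(\cdot)+R_x(\cdot)$ with $R_x$ a positive measure of total mass $1-\mu(\mathds{X})$; then $\mathcal{T}^t(\cdot|x,\gamma)-\mathcal{T}^t(\cdot|x',\gamma)=R_x-R_{x'}$, so the total-variation distance is bounded by $2(1-\mu(\mathds{X}))$, which establishes $f$ with $\beta=1-\mu(\mathds{X})<1$. For $e\to f$, let $\nu=\mathcal{T}^t(\cdot|x,\gamma)-\mathcal{T}^t(\cdot|x',\gamma)$ and use its Hahn--Jordan decomposition $\nu=\nu^+-\nu^-$ with positive set $P$. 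Since both kernels are dominated by $\mu$, $\nu^+(\mathds{X})=\mathcal{T}^t(P|x,\gamma)-\mathcal{T}^t(P|x',\gamma)\leq\mu(P)$ and symmetrically $\nu^-(\mathds{X})\leq\mu(P^c)$, so $\|\nu\|_{TV}\leq\mu(\mathds{X})<2$, yielding $f$ with $\beta=\mu(\mathds{X})/2$.

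Finally I would establish the equivalence chain. For $f\leftrightarrow g$, invoke the classical Dobrushin coefficient characterization: the ergodic coefficient $\delta(K)=\sup_{x,x'}\|K(\cdot|x)-K(\cdot|x')\|_{TV}/2$ coincides with $1-\sup\{\alpha:\exists\mu_\gamma,\,K(\cdot|x)\geq\alpha\mu_\gamma\;\forall x\}$, where $\mu_\gamma$ is built by taking pairwise infima of kernel slices and then extending measurably using the uniform bound from $f$. For $g\to h$, apply the standard splitting/coupling construction: every $t$ steps, with probability $\beta$ draw the next state independently from $\mu_\gamma$ and otherwise continue, producing geometric decay at rate $(1-\beta)^{\lfloor n/t\rfloor}$ uniformly in the starting state. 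For $h\to g$ (and $h\to f$), choose $n$ large enough that $c\beta^n<1/2$, which via the triangle inequality gives $\|\mathcal{T}^n(\cdot|x,\gamma)-\mathcal{T}^n(\cdot|x',\gamma)\|_{TV}\leq 1$, feeding back into $f$ and hence $g$. For $i\to h$, uniform ergodicity supplies a $t_0$ with $\sup_x\|\mathcal{T}^{t_0}(\cdot|x,\gamma)-p_\gamma\|_{TV}\leq 1/2$, whence $\sup_{x,x'}\|\mathcal{T}^{t_0}(\cdot|x,\gamma)-\mathcal{T}^{t_0}(\cdot|x',\gamma)\|_{TV}\leq 1$, so $f$ holds and the chain returns geometric rate. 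The main obstacle is the $f\leftrightarrow g$ step: constructing a \emph{single} minorization measure $\mu_\gamma$ that works uniformly in $x$ on a possibly uncountable state space requires the measurable-selection-style argument underlying the Dobrushin coefficient characterization, and is where care must be taken with the uniformity in $\gamma\in\Gamma_s$.
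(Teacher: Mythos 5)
The paper offers no proof of Theorem \ref{erg_conds}: it is quoted verbatim from \cite{HeMoRo91} (Theorem 3.2) purely for reference, so there is no in-paper argument to compare against and your proposal has to stand on its own. Most of it does: $a\to b$, $c\to d\to b$, $b\to f$, $e\to f$ (the Hahn--Jordan estimate), $g\to h$ (or more simply $g\to f\to h$ by iterating the contraction), $h\to i$, and $i\to f$ are all correct as sketched.

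The genuine gap is the step $f\to g$, on which your whole equivalence chain (including the ``feeding back into $f$ and hence $g$'' in $h\to g$) depends. The identity you invoke --- that the Dobrushin coefficient of $K=\mathcal{T}^t(\cdot|\cdot,\gamma)$ equals $1-\sup\{\alpha:\exists\mu_\gamma,\ K(\cdot|x)\ge\alpha\mu_\gamma\ \forall x\}$ --- is false. What is true is only the pairwise statement $\tfrac12\|K(\cdot|x)-K(\cdot|x')\|_{TV}=1-\big(K(\cdot|x)\wedge K(\cdot|x')\big)(\mathds{X})$: the coefficient controls the overlap of each \emph{pair} of rows, not the existence of a single measure minorizing \emph{all} rows at the same time index. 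Concretely, on $\mathds{X}=\{1,2,3\}$ take rows $\mu_1=(\tfrac12,\tfrac12,0)$, $\mu_2=(0,\tfrac12,\tfrac12)$, $\mu_3=(\tfrac12,0,\tfrac12)$: every pair is at total variation distance $1$, so $f$ holds at $t=1$ with $\beta=\tfrac12$, yet the pointwise infimum of the three rows is identically zero, so no $\alpha>0$ and probability measure $\mu$ satisfy $\mu_i\ge\alpha\mu$ for all $i$. Hence ``taking pairwise infima of kernel slices and extending measurably'' cannot produce $\mu_\gamma$. The correct implication $f\to g$ permits a \emph{larger} $t$ in $g$ than in $f$ and is the genuinely hard part of the theorem: for each fixed $\gamma$ it is the statement that a uniformly ergodic chain has the whole state space as a small set, which rests on the existence-of-small-sets machinery for $\psi$-irreducible aperiodic chains (Jain--Jamison; Meyn--Tweedie, Theorems 5.2.2 and 16.0.2) or the argument referenced in \cite{HeMoRo91}; on top of that one must argue that $t$ and $\beta$ in $g$ can be chosen independently of $\gamma\in\Gamma_s$ (only $\mu_\gamma$ may depend on $\gamma$), a uniformity issue you flag but do not resolve. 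As written, your plan establishes $f\leftrightarrow h\leftrightarrow i$ and $g\to f$, but not $f\to g$, so the full chain $f\leftrightarrow g\leftrightarrow h\leftrightarrow i$ is not proved.
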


\bibliographystyle{plain}
\bibliography{references_acc,SerdarBibliography_acc,SerdarBibliography,references,AliBibliography}

\begin{thebibliography}{10}

\bibitem{aldous1981weak}
D.~J. Aldous.
\newblock {\em Weak convergence and the general theory of processes}.
\newblock Editeur inconnu, 1981.

\bibitem{aastrom2013adaptive}
K.~J. {\AA}str{\"o}m and B.~Wittenmark.
\newblock {\em Adaptive control}.
\newblock Courier Corporation, 2013.

\bibitem{backhoff2019all}
J.~Backhoff-Veraguas, D.~Bartl, M.~Beiglb{\"o}ck, and M.~Eder.
\newblock All adapted topologies are equal.
\newblock {\em arXiv preprint arXiv:1905.00368}, 2019.

\bibitem{julio2020adapted}
J.~Backhoff-Veraguas, D.~Bartl, M.~Beiglb{\"o}ck, and M.~Eder.
\newblock Adapted wasserstein distances and stability in mathematical finance.
\newblock {\em Finance and Stochastics}, 24(3):601--632, 2020.

\bibitem{basu2008stochastic}
Arnab Basu and Vivek~S Borkar.
\newblock Stochastic control with imperfect models.
\newblock {\em SIAM journal on control and optimization}, 47(3):1274--1300,
  2008.

\bibitem{bayraktar2020continuity}
E.~Bayraktar, Yan Y.~Dolinsky, and J.~Guo.
\newblock Continuity of utility maximization under weak convergence.
\newblock {\em Mathematics and Financial Economics}, pages 1--33, 2020.

\bibitem{bertsekas}
D.~P. Bertsekas.
\newblock {\em Dynamic Programming and Stochastic Optimal Control}.
\newblock Academic Press, New York, New York, 1976.

\bibitem{BillingsleyProbMeasure}
P.~Billingsley.
\newblock {\em Probability and Measure}.
\newblock Wiley, 3rd ed.), New York, 1995.

\bibitem{Devroye85}
L.~Devroye and L.~Gy\"orfi.
\newblock {\em Non-parametric Density Estimation: The $L_1$ View}.
\newblock John Wiley, New York, 1985.

\bibitem{dudley1999uniform}
R.~M. Dudley.
\newblock {\em Uniform central limit theorems}, volume~23.
\newblock Cambridge Univ. Press, 1999.

\bibitem{Dudley02}
R.~M. Dudley.
\newblock {\em Real Analysis and Probability}.
\newblock Cambridge University Press, Cambridge, 2nd edition, 2002.

\bibitem{Dudley}
R.~M. Dudley, E.~Gine, and J.~Zinn.
\newblock Uniform and universal {G}livenko-{C}antelli classes.
\newblock {\em J. of Theoretical Probability}, 4:485--510, 1991.

\bibitem{dupuis2000robust}
P.~Dupuis, M.~R. James, and I.~Petersen.
\newblock Robust properties of risk-sensitive control.
\newblock {\em Mathematics of Control, Signals and Systems}, 13(4):318--332,
  2000.

\bibitem{erdogan2005}
E.~Erdo\u{g}an and G.~N. Iyengar.
\newblock Ambiguous chance constrained problems and robust optimization.
\newblock {\em Mathematical Programming}, 107(1-2):37--61, 2005.

\bibitem{esfahani2015}
P.~M. Esfahani and D.~Kuhn.
\newblock Data-driven distributionally robust optimization using the
  {W}asserstein metric: Performance guarantees and tractable reformulations.
\newblock {\em Mathematical Programming}, pages 1--52, 2017.

\bibitem{Ghosh}
J.~K. Ghosh and R.~V. Ramamoorthi.
\newblock {\em Bayesian Nonparametrics}.
\newblock Springer, New York, 2003.

\bibitem{goodwin1981discrete}
G.~C. Goodwin, P.~J. Ramadge, and P.~E. Caines.
\newblock Discrete time stochastic adaptive control.
\newblock {\em SIAM Journal on Control and Optimization}, 19(6):829--853, 1981.

\bibitem{goodwin2014adaptive}
G.~C. Goodwin and K.~S. Sin.
\newblock {\em Adaptive filtering prediction and control}.
\newblock Courier Corporation, 2014.

\bibitem{GrayInfo}
R.~M. Gray.
\newblock {\em Entropy and Information Theory}.
\newblock Springer-Verlag, New York, 1990.

\bibitem{hellwig1996sequential}
M.~F. Hellwig.
\newblock Sequential decisions under uncertainty and the maximum theorem.
\newblock {\em Journal of Mathematical Economics}, 25(4):443--464, 1996.

\bibitem{her89}
O.~Hern\'andez-Lerma.
\newblock {\em Adaptive {M}arkov Control Processes}.
\newblock Springer-Verlag, 1989.

\bibitem{HeMoRo91}
O.~Hern\'andez-Lerma, R.~Montes-De-Oca, and R.~Cavazos-Cadena.
\newblock Recurrence conditions for {M}arkov decision processes with {B}orel
  state space: a survey.
\newblock {\em Ann. Oper. Res.}, 28(1):29--46, 1991.

\bibitem{jacobson1973optimal}
D.~Jacobson.
\newblock Optimal stochastic linear systems with exponential performance
  criteria and their relation to deterministic differential games.
\newblock {\em IEEE Transactions on Automatic control}, 18(2):124--131, 1973.

\bibitem{karaCDC2019robustness}
A.~D. Kara, M.~Raginsky, and S.~Y{\"u}ksel.
\newblock Robustness to incorrect models in average-cost optimal stochastic
  control.
\newblock In {\em 2019 IEEE 58th Conference on Decision and Control (CDC)},
  pages 7970--7975. IEEE, 2019.

\bibitem{Kernel2018_cdc}
A.~D. {Kara} and S.~{Yüksel}.
\newblock Robustness to incorrect system models in stochastic control and
  application to data-driven learning.
\newblock In {\em 2018 IEEE Conference on Decision and Control (CDC)}, pages
  2753--2758, Dec 2018.

\bibitem{kara2020robustness}
A.~D. Kara and S.~Y\"uksel.
\newblock Robustness to incorrect system models in stochastic control.
\newblock {\em SIAM Journal on Control and Optimization}, 58(2):1144--1182,
  2020.

\bibitem{Kleptsyna2016}
M.~L. Kleptsyna and A.~Yu. Veretennikov.
\newblock On robustness of discrete time optimal filters.
\newblock {\em Mathematical Methods of Statistics}, 25(3):207--218, 2016.

\bibitem{kumar2015stochastic}
P.~R. Kumar and P.~Varaiya.
\newblock {\em Stochastic systems: Estimation, identification, and adaptive
  control}.
\newblock SIAM, 2015.

\bibitem{Langen81}
Hans-Joachim Langen.
\newblock Convergence of dynamic programming models.
\newblock {\em Mathematics of Operations Research}, 6(4):493--512, 1981.

\bibitem{mcdonald2020exponential}
C.~McDonald and S.~Y\"uksel.
\newblock Exponential filter stability via {D}obrushin's coefficient.
\newblock {\em Electronic Communications in Probability}, 25, 2020.

\bibitem{Par67}
K.R. Parthasarathy.
\newblock {\em Probability Measures on Metric Spaces}.
\newblock AMS Bookstore, 1967.

\bibitem{dupuis2000kernel}
I.~Petersen, M.~R. James, and P.~Dupuis.
\newblock Minimax optimal control of stochastic uncertain systems with relative
  entropy constraints.
\newblock {\em IEEE Transactions on Automatic Control}, 45(3):398--412, 2000.

\bibitem{dai1996connections}
P.~Dai Pra, L.~Meneghini, and W.~J. Runggaldier.
\newblock Connections between stochastic control and dynamic games.
\newblock {\em Mathematics of Control, Signals and Systems}, 9(4):303--326,
  1996.

\bibitem{raginsky2013empirical}
M.~Raginsky.
\newblock Empirical processes, typical sequences, and coordinated actions in
  standard borel spaces.
\newblock {\em IEEE Transactions on Information Theory}, 59(3):1288--1301,
  2013.

\bibitem{royden}
H.~L. Royden.
\newblock {\em Real Analysis}.
\newblock Macmillan, New York, 1968.

\bibitem{savkin1996robust}
A.~V. Savkin and I.~R. Petersen.
\newblock Robust control of uncertain systems with structured uncertainty.
\newblock {\em Journal of Mathematical Systems, Estimation, and Control},
  6(3):1--14, 1996.

\bibitem{serfozo82}
R.~Serfozo.
\newblock Convergence of lebesgue integrals with varying measures.
\newblock {\em Sankhy{\=a}: The Indian Journal of Statistics, Series A}, pages
  380--402, 1982.

\bibitem{sun2015}
H.~Sun and H.~Xu.
\newblock Convergence analysis for distributionally robust optimization and
  equilibrium problems.
\newblock {\em Mathematics of Operations Research}, 41:377--401, 07 2015.

\bibitem{ugrinovskii1998robust}
V.~A. Ugrinovskii.
\newblock Robust {H}-infinity control in the presence of stochastic
  uncertainty.
\newblock {\em International Journal of Control}, 71(2):219--237, 1998.

\bibitem{VanHandel}
R.~van Handel.
\newblock The universal {G}livenko--{C}antelli property.
\newblock {\em Probability Theory and Related Fields}, 2012.

\bibitem{Vap00}
V.~N. Vapnik.
\newblock {\em The Nature of Statistical Learning Theory}.
\newblock Springer, New York, 2nd ed., 2000.

\bibitem{villani2008optimal}
C.~Villani.
\newblock {\em Optimal transport: old and new}.
\newblock Springer, 2008.

\bibitem{xie92}
L.~Xie and E.~de~Souza~Carlos.
\newblock Robust {H} infinity control for linear systems with norm-bounded
  time-varying uncertainty.
\newblock {\em IEEE Transactions on Automatic Control}, 37(8):1188--1191, Aug
  1992.

\end{thebibliography}

\end{document}